\newcommand{\Minion}{\texttt{minion}}
\newcommand{\ie}{i.e.\ }
\newcommand{\eg}{e.g.\ }
\newcommand{\resp}{resp.\ }        %
\newcommand{\vs}{vs.\ }
\newcommand{\pref}[2]{\hyperref[#2]{#1\ref*{#2}}}
\newcommand{\Sec}[2][Sect.]{\pref{#1~}{sec:#2}}
\newcommand{\Fig}[2][Fig.]{\pref{#1~}{fig:#2}}
\newcommand{\Tab}[2][Table]{\pref{#1~}{tbl:#2}}
\newcommand{\Def}[2][Def.]{\pref{#1~}{def:#2}}
\newcommand{\Lem}[2][Lemma]{\pref{#1~}{lem:#2}}
\newcommand{\Thm}[2][Theorem]{\pref{#1~}{thm:#2}}
\newcommand{\Ex}[2][Ex.]{\pref{#1~}{ex:#2}}
\newcommand{\Pty}[1]{(\pref{P}{p:#1})}
\newcommand{\Ev}[1][]{\ifempty{#1}{\alphabet}{\alphabet_{#1}}}  %
\newcommand{\ObsEv}[1][]{\Ev[#1]^\#}           %
\newcommand{\Obs}[1][]{\mathcal{O}_{#1}}      %
\newcommand{\BSys}[1][]{\mathcal{B}_{#1}}     %
\newcommand{\Sys}[1][]{\mathcal{S}_{#1}}      %
\newcommand{\sat}[1][]{\models^s_{#1}}
\newcommand{\nsat}[1][]{\not\models^s_{#1}}
\newcommand{\viol}[1][]{\models^v_{#1}}
\newcommand{\prefix}{\preceq}
\newcommand{\extends}{\succeq}
\renewcommand{\phi}{\varphi}
\renewcommand{\implies}{\Rightarrow}
\renewcommand{\iff}{\Leftrightarrow}
\newcommand{\bigand}{\bigwedge}
\newcommand{\bigor}{\bigvee}
\newcommand{\dmHyp}{\ensuremath{\varphi_{\mathsf{dm}}}}
\newcommand{\dmHypSub}[1]{\ensuremath{\varphi_{#1}}}
\newcommand{\enableLstShortInline}{%
  \lstMakeShortInline[%
    style=base-plain,%
    flexiblecolumns=false,%
    mathescape=true,%
    basicstyle=\tt]@
}
\newcommand{\disableLstShortInline}{\lstDeleteShortInline@}
\newsavebox{\subfloatbox}
\newenvironment{sfloat}[3]{%
  \def\subfloatcap{#1}%
  \def\subfloatlabel{#2}%
  \begin{lrbox}{\subfloatbox}%
  \begin{minipage}[t]{#3}}{%
  \end{minipage}%
  \end{lrbox}%
  \subfloat[\subfloatcap]{\label{\subfloatlabel}\usebox{\subfloatbox}}}
\newcommand{\Into}{\mathrel{\rightarrow}}
\lstdefinestyle{my_style}{
    numberstyle=\tiny,%
    basicstyle=\linespread{0.8}\footnotesize,
    breakatwhitespace=false,
    breaklines=false,
    keepspaces=true,
    numbers=left,
    numbersep=5pt,
    showspaces=false,
    showstringspaces=false,
    showtabs=false,
    tabsize=4,
    frame=single,
    xleftmargin=0.415cm,
    xrightmargin=0.13cm,
    numberbychapter=false
}
\newcommand{\Always}{\LTLsquare}
\newcommand{\Event}{\LTLdiamond} 
\newcommand{\Next}{\LTLcircle}
\newcommand{\U}{\mathbin{\mathcal{U}}}
\newcommand{\F}{\Event}
\newcommand{\G}{\Always}
\newcommand{\X}{\Next}
\newcommand{\DefinedAs}{\,\stackrel{\text{def}}{=}\,}
\newcommand{\DDM}{DDM\xspace}
\DeclareMathOperator{\proj}{proj}
\newcommand{\IN}[1]{\ensuremath{#1_{\textit{in}}}}
\newcommand{\OUT}[1]{\ensuremath{#1_{\textit{out}}}}
\newcommand{\uvar}{\pi}
\newcommand{\vvar}{\pi'}
\newcommand{\pvar}{\tau}
\newcommand{\qvar}{\tau'}
\newcommand{\uin}{\IN{\uvar}}
\newcommand{\vin}{\IN{\vvar}}
\newcommand{\uout}{\OUT{\uvar}}
\newcommand{\vout}{\OUT{\vvar}}
\newcommand{\pin}{\IN{\pvar}}
\newcommand{\qin}{\IN{\qvar}}
\newcommand{\myparagraph}[1]{\paragraph{\textup{\textbf{#1}}}}
\newcommand{\alphabet}{\mathrm{\Sigma}}
\renewcommand{\AP}{\mathsf{AP}}
\newcommand{\V}{\mathcal{V}}
\newcommand{\powerset}{\mathcal{P}}
\newcommand{\ialphabet}{\alphabet^\omega}
\newcommand{\fpowerset}{\powerset_{\textit{fin}}}
\newcommand{\Or}{\mathrel{\vee}}
\newcommand{\Iff}{\mathrel{\leftrightarrow}}
\newcommand{\DefOR}{\ensuremath{\hspace{0.2em}\big|\hspace{0.2em}}}
\newcommand{\Z}{\hphantom{0}}
\newcommand{\PM}[1]{{\tiny$\,\pm$#1}}
\newcommand{\FF}{$\bot$}
\newcommand{\QM}{$?$}
\newcommand{\Eager}{\multicolumn{1}{c}{Eager}}
\newcommand{\Lazy}{\multicolumn{1}{c}{Lazy}}
\begin{document}

\title{Gray-box Monitoring of Hyperproperties}
\subtitle{%
  Extended Version%
  \footnote{This is an extended version of a paper presented at the
    23rd International Symposium on Formal Methods (FM~'19).
    This version contains full proofs, a description of the
    proof-of-concept monitor for DDM, and experimental results that
    were not included in the original publication.
    The original publication is available from Springer at
    \url{https://doi.org/10.1007/978-3-030-30942-8_25}
  }}
\author{
  Sandro Stucki\inst{1} %
  \and
  C\'esar S\'anchez\inst{2} %
  \and \\
  Gerardo Schneider\inst{1} %
  \and
  Borzoo Bonakdarpour\inst{3} %
}
\authorrunning{S.~Stucki et al.}
\institute{%
University of Gothenburg, Sweden, %
\email{sandro.stucki@gu.se,gerardo@cse.gu.se} \and
IMDEA Software Institute, Spain, %
\email{cesar.sanchez@imdea.org} \and
Iowa State University, USA, %
\email{borzoo@iastate.edu}
}
\maketitle              %

\begin{abstract}

  Many important system properties, particularly in security and
  privacy, cannot be verified statically.
  Therefore, runtime verification is an appealing alternative.
  Logics for hyperproperties, such as HyperLTL, support a
  rich set of such properties.
  We first show that {\em black-box} monitoring of HyperLTL is in general
  unfeasible, and suggest a {\em gray-box} approach.
  Gray-box monitoring implies performing analysis of the system at
  run-time, which brings new limitations to monitorabiliy (the
  feasibility of solving the monitoring problem).
  Thus, as another contribution of this paper, we refine the classic
  notions of {\em monitorability}, both for trace properties and
  hyperproperties, taking into account the computability of the
  monitor.
  We then apply our approach to monitor a privacy hyperproperty called
  {\em distributed data minimality}, expressed as a HyperLTL property,
  by using an SMT-based static verifier at runtime.
\end{abstract}

\section{Introduction}
\label{sec:introduction}

Consider a {\em confidentiality} policy $\varphi$ that requires that every
pair of
separate executions of a system agree on the
position of occurrences of some proposition~$a$.
Otherwise, an external observer may learn some sensitive information
about the system. 
We are interested in studying how to build runtime monitors for
properties like $\varphi$, where the monitor receives independent
executions of the system under scrutiny and intend to determine
whether or not the system satisfies the property.
While no such monitor can determine whether the system satisfies
$\varphi$ --- as it cannot determine whether it has observed the whole
(possibly infinite) set of traces --- it may be able to detect
violations.
For example, if the monitor receives finite executions 
$t_1 = \{a\}\{\}\{\}\{a\}\{\}$ and 
$t_2 = \{a\}\{a\}\{\}\{\}\{a\}$, then it is straightforward to see
that the pair $(t_1, t_2)$ violates $\varphi$ (the traces do not agree
on the truth value of $a$ in the second, fourth, and fifth positions).

Now, if we change the policy to $\varphi'$ requiring that, for every
execution, there must exist a different one that agrees with the first
execution on the position of occurrences of $a$, the monitor cannot
even detect violations of $\varphi'$.
Indeed, it is not possible to tell at run-time whether or not for each
execution (from a possibly infinite set), there exists a related one.
Such properties for which no monitor can detect satisfaction or
violation are known as {\em non-monitorable}.

Monitorability was first defined in~\cite{pnueli06psl} as the
problem of deciding whether any extension of an observed
trace would violate or satisfy a property expressed in LTL.
We call this notion \emph{semantic black-box monitorability}.
It is semantic because it defines a decision problem (the existence of
a satisfying or violating trace extension) without requiring a
corresponding decision procedure.
In settings like LTL the problem is decidable and the decision
procedures are well-studied, but in other settings, a property may be
semantically monitorable even though no algorithm to monitor it exists.
This notion of monitorability is ``black-box'' because it only
considers the temporal logic formula to determine the plausibility of
an extended observation that violates or satisfies the formula.
This is the only sound assumption without looking inside the system.
Many variants of this definition followed, mostly for trace
logics~\cite{havelund18runtime} (see also~\cite{bartocci18introduction}).

The definition of semantic monitorability is extended
in~\cite{agrawal16runtime} to the context of {\em
  hyperproperties}~\cite{cs10}.
A hyperproperty is essentially a set of sets of traces, so monitoring
hyperproperties involves reasoning about multiple traces
simultaneously.
The confidentiality example discussed above is a
hyperproperty. 
The notion of monitorability for hyperproperties
in~\cite{agrawal16runtime} also considers whether extensions of an
observed trace, or of other additional observed traces, would violate
or satisfy the property.
An important drawback of these notions of monitorability is that they
completely ignore the role of the system being monitored and the
possible set of executions that it can exhibit to compute a verdict of
a property.

\begin{wrapfigure}{R}{.4\textwidth}
\vspace{-2em}
\centering
  \begin{tikzpicture}[scale=0.4,
  ipe import,
  ipe node/.append style={font=\footnotesize},%
  ipe dash dashed/.style={dash pattern=on 1.2bp off 1.2bp},
  refs/.style={ipe node,font=\scriptsize}]
  \draw[-to]
    (192, 320)
     -- (192, 496);
  \draw[-to]
    (192, 320)
     -- (384, 320);
  \draw[-to]
    (192, 320)
     -- (96, 192);
  \draw[ipe dash dashed]
    (132.05, 368.066)
     -- (192, 448);
  \draw[ipe dash dashed]
    (192, 448)
     -- (320, 448);
  \draw[ipe dash dashed]
    (132.05, 368.066)
     -- (132.018, 240.024);
  \draw[ipe dash dashed]
    (320, 448)
     -- (320, 320);
  \draw[ipe dash dashed]
    (260.003, 368.004)
     -- (320, 448);
  \draw[ipe dash dashed]
    (132.05, 368.066)
     -- (260.003, 368.004);
  \draw[ipe dash dashed]
    (132.018, 240.024)
     -- (259.629, 239.505);
  \draw[ipe dash dashed]
    (260.003, 368.004)
     -- (259.629, 239.505);
  \draw[ipe dash dashed]
    (259.629, 239.505)
     -- (320, 320);
  \node[ipe node]
     at (200, 476) {trace/hyper};
  \node[ipe node, anchor=north west]
     at (332, 306) {\parbox{5ex}{black/\\gray}};
  \node[ipe node]
     at (116, 196) {computability};
  \fill[black]
    (192, 448) circle[radius=4];
  \fill[black]
    (192, 320) circle[radius=4];
  \node[refs, anchor=south east]
     at (184, 452) {\cite{agrawal16runtime,bss18,havelund18runtime}};
  \node[refs, anchor=north west]
      at (194, 310) {\parbox{24pt}{%
          \raggedright%
          \cite{bauer11runtime,bauer07good,falcone12what,%
            finkbeiner17monitoring,pnueli06psl}%
        }};
  \node[refs, anchor=south west]
     at (328, 330) {\cite{zhang12runtime}};
  \fill[black]
    (320, 320) circle[radius=4];
  \filldraw[draw=black, fill=white!66.3!black]
    (258.287, 367.836) circle[radius=10.0678];
\end{tikzpicture}
\caption{The monitorability cube.}
\label{fig:cube}
\vspace{-1.6em}
\end{wrapfigure}

In this paper, we consider a landscape of monitorability aspects along
three dimensions, as depicted in \Fig{cube}.
We explore the ability of the monitor to reason about multiple traces
simultaneously (the trace/hyper dimension).
We first show that a large class of hyperproperties that involve
quantifier alternations are non-monitorable.
That is, no matter the observation, no verdict can ever be declared.
We then propose a solution based on a combination of static analysis
and runtime verification.
If the analysis of the system is completely precise, we call it
\emph{white-box} monitoring.
{\em Black-box} monitoring refers to the classic approach of ignoring
the system and crafting general monitors that provide sound verdicts
for every system.  
In \emph{gray-box} monitoring, the monitor uses an approximate set of
executions, given for example as a model, in addition to the observed
finite execution.
The combination of static analysis and runtime verification allows to
monitor hyperproperties of interest, but it involves reasoning about
possible executions of the system (the black/gray dimension in
\Fig{cube}).
This, in turn, forces us to consider the computability limitations of
the monitors themselves as programs (the computability dimension).

We apply this approach to monitoring a complex hyperproperty of
interest in privacy, namely, {\em data minimization}.
The principle of data minimization (introduced in Article 5 of the EU 
General Data Protection Regulation~\cite{gdpr2012}) from a software 
perspective requires that only data that is semantically used by a program 
should be collected and processed. 
When data is collected from independent sources, the property is
called {\em distributed data minimization} (\DDM) \cite{ASS17dm,PSS18rvh}.
Our approach for monitoring \DDM is as follows.
We focus on detecting violations of \DDM (which we express in HyperLTL
using one quantifier alternation).
We then create a gray-box monitor that collects dynamically potential
witnesses for the existential part.
The monitor then invokes an oracle (combining symbolic execution trees
and SMT solving) to soundly decide the universally quantified inner
sub-formula.
Our approach is sound but approximated, so the monitor may give an
inconclusive answer, depending on the precision of the static
verification.  
\paragraph{Contributions.} In summary, the contributions of this paper
are the following:
\begin{enumerate}[$(1)$]
\item Novel richer definitions of monitorability that consider trace
  and hyper-properties, and the possibility of analyzing the system
  (gray-box monitoring).
  This enables the monitoring, via the combination of static analysis
  and runtime verification, of properties that are non-monitorable in
  a black-box manner.
  Our novel notions of monitorability also cover the computability
  limitations of monitors as programs, which is inevitable once the
  analysis is part of the monitoring process.
\item We express \DDM as a hyperproperty and study its monitorability
  within the richer landscape defined above.
  We then apply the combined approach where the static analysis in
  this case is based on symbolic execution (\Sec{monitoring}).
\item We describe a proof-of-concept implementation of our gray-box
  monitor for \DDM, apply it to some representative examples, and
  present empirical evaluation (\Sec{minion}).
\end{enumerate}
The source code of our implementation is freely available online.%
\footnote{At \url{https://github.com/sstucki/minion/}}

\section{Background}
\label{sec:background}

Let $\AP$ be a finite set of \emph{atomic propositions} and $\alphabet
= 2^{\AP}$ be the finite \emph{alphabet}.
We call each element of $\alphabet$ a \emph{letter} (or an
\emph{event}).
Throughout the paper, $\alphabet^\omega$ denotes the set of all
infinite sequences (called {\em traces}) over $\alphabet$, and
$\alphabet^*$ denotes the set of all finite traces over
$\alphabet$.
For a trace $t \in \alphabet^\omega$ (or $t \in \alphabet^*$), $t[i]$
denotes the $i^{th}$ element of $t$, where $i \in \mathbb{N}$.
We use $|t|$ to denote the length (finite or infinite) of trace $t$.
Also, $t[i, j]$ denotes the subtrace of $t$ from position $i$ up to
and including position $j$ (or $\epsilon$ if $i>j$ or if $i>|t|$).
In this manner $t[0,i]$ denotes the prefix of $t$ up to and including
$i$ and $t[i,..]$ denotes the suffix of $t$ from $i$ (including $i$).

Given a set $X$, we use $\powerset(X)$ for the set of subsets of $X$
and $\fpowerset(X)$ for the set of finite subsets of $X$.
Let $u$ be a finite trace and $t$ a finite or infinite trace.
We denote the concatenation of $u$ and $t$ by $ut$.
Also, $u \prefix t$ denotes the fact that $u$ is a prefix of $t$.
Given a finite set $U$ of finite traces and an arbitrary set $W$ of
finite or infinite traces, we say that $W$ extends $U$ (written
$U \prefix W$) if, for all $u \in U$, there is a $v \in W$, such that
$u \prefix v$.
Note that every trace in $U$ is extended by some trace in $W$ (we call
these \emph{trace extensions}), and that $W$ may also contain
additional traces with no prefix in $U$ (we call these \emph{set
  extensions}).

\subsection{LTL and HyperLTL}

We now briefly introduce LTL and HyperLTL.
The syntax of LTL~\cite{pnueli77temporal} is:
\begin{align*}
\varphi & ::= a  \DefOR \neg \varphi \DefOR \varphi \Or \varphi \DefOR \X \varphi \DefOR
\varphi \U \varphi
\end{align*}
where $a\in \AP$.
The semantics of LTL is given by associating to a formula the set of
traces $t\in \Sigma^\omega$ that it accepts:
\[
\begin{array}{l@{\hspace{2em}}c@{\hspace{2em}}l}
t \models p           & \text{iff} & p\in t[0] \\
t \models \neg\varphi  & \text{iff} & t \not\models \varphi \\
t \models \varphi_1\Or \varphi_2 & \text{iff} & t \models \varphi_1 \text{ or }  t \models \varphi_2 \\
t \models \X \varphi & \text{iff} &  t[1,..] \models \varphi \\
t \models \varphi_1\U \varphi_2 & \text{iff} &  \text{for some $i$, } t[i,..] \models \varphi_2 \text{ and } \text{for all $j <i$, } t[j,..]\models \varphi_1\\
\end{array}
\]
We will also use the usual derived operators
$(\Event\varphi\equiv\textit{true}\,\U\varphi)$ and
$(\Always\varphi\equiv\neg\Event\neg\varphi)$.
All properties expressible in LTL are {\em trace properties} (each
individual trace satisfies the property or not, independently of any other
trace).
Some important properties, such as information-flow security policies
(including confidentiality, integrity, and secrecy), cannot be
expressed as trace properties but require reasoning about two (or
more) independent executions (perhaps from different inputs)
simultaneously.
Such properties are called {\em hyperproperties}~\cite{cs10}.
HyperLTL~\cite{cfkmrs14} is a temporal logic for hyperproperties that
extends LTL by allowing explicit quantification over execution traces.
The syntax of HyperLTL is:
\begin{align*}
\varphi & ::= \forall \pi.\varphi \DefOR \exists \pi.\varphi \DefOR \psi
\hspace{4em}
\psi ::= a_\pi \DefOR \neg\psi \DefOR \psi \Or \psi \DefOR \X \psi \DefOR \psi \U \psi
\end{align*}
A trace assignment $\Pi : \V \rightarrow \alphabet^\omega$ is a
partial function mapping trace variables in $\V$ to infinite traces.
We use $\Pi_\emptyset$ to denote the empty assignment, and $\Pi[\pi
\rightarrow t]$ for the same function as $\Pi$, except that $\pi$ is
mapped to trace $t$.
The semantics of HyperLTL is defined by associating formulas with
pairs $(T,\Pi)$, where $T$ is a set of traces and $\Pi$ is a trace
assignment:
\[
\begin{array}{l@{\hspace{2em}}c@{\hspace{2em}}l}
T, \Pi \models \forall \pi.\varphi & \text{iff} &
\text{for all } t \in T \text{ the following holds } T, \Pi[\pi \rightarrow t] \models
\varphi\phantom{aaaaaaaaaaaaaaa}\\
T, \Pi \models \exists \pi.\varphi & \text{iff} &
\text{there exists } t \in T  \text{ such that } T, \Pi[\pi \rightarrow t]
\models \varphi\\
T, \Pi \models \psi & \text{iff} & \Pi \models \psi
\end{array}
\]
The semantics of the temporal inner formulas is defined in terms of
the traces associated with each path (here $\Pi[i,..]$ denotes the map
that assigns $\pi$ to $t[i,..]$ if $\Pi(\pi)=t$):
\[
\begin{array}{l@{\hspace{1em}}c@{\hspace{2em}}l}
\Pi\models a_{\pi} & \text{iff} & a\in\Pi(\pi)[0]\\
\Pi\models  \psi_1 \Or \psi_2 & \text{iff} & \Pi\models  \psi_1 \text{ or } \Pi\models  \psi_2\\
\Pi\models  \neg \psi  & \text{iff} & \Pi\not\models\psi\\
\Pi\models  \X \psi  & \text{iff} & \Pi[1..]\models  \psi\\
\Pi\models  \psi_1\U \psi_2  & \text{iff} & \text{for some $i$, } \Pi[i,..]\models\psi_2 \text{, and } \text{for all $j<i$ } T,\Pi[j,..]\models\psi_1
\end{array}
\]
We say that a set $T$ of traces satisfies a HyperLTL formula $\varphi$
(denoted $T \models \varphi$) if and only if $T, \Pi_\emptyset \models
\varphi$.

\begin{example} Consider the HyperLTL formula
  $\phi = \forall \pi. \forall \pi'. \G(a_{\pi}\Iff a_{\pi'})$
  and $T = \{t_1, t_2, t_3\}$, where $t_1=\{a,b\}\{a,b\}\{\}\{b\}\cdots$,
  $t_2 = \{a\}\{a\}\{b\}\cdots$ and $t_3 = \{\}\{a\}\{b\}\cdots$
  Although traces $t_1$ and $t_2$ together satisfy $\phi$, $t_3$
  does not agree with the other two, i.e., $a\in
  t_1(0), a\in t_2(0)$, but $a\notin{}t_3(0)$.
Hence, $T\not\models\varphi$.
\end{example}

\subsection{Semantic Monitorability}
\label{sec:monhyp}

{\em Runtime verification} (RV) is concerned with (1) generating a
monitor from a formal specification $\varphi$, and (2) using the
monitor to detect whether or not $\varphi$ holds by observing events
generated by the system at run time.
{\em Monitorability} refers to the possibility of monitoring a property.
Some properties are non-monitorable because no finite observation can
lead to a conclusive verdict.
We now present some abstract definitions to encompass previous notions
of monitorability in a general way.
These definitions are made concrete by instantiating them for example
to traces (for trace properties) or sets of traces (for
hyperproperties), see \Ex{OBinstance} below.
\begin{itemize}
\item {\bf Observation.} We refer to the finite information provided
  dynamically to the monitor up to a given instant as an
  \emph{observation}.
  We use $O$ and $P$ to denote individual observations and $\Obs$ to
  denote the set of all possible observations, equipped with an
  operator $O \prefix P$ that captures the extension of an
  observation.
\item {\bf System behavior.} We use $\BSys$ to denote the universe of
  all possible \emph{behaviors} of a system.
  A behavior $B \in \BSys$ may, in general, be an infinite piece of
  information.
  By abuse of notation, $O \prefix B$ denotes that observation
  $O \in \Obs$ can be extended to a behavior $B$.
\end{itemize}

\begin{example}
  \label{ex:OBinstance}
  When monitoring trace properties such as LTL, we have
  $\Obs=\Sigma^*$, an observation is a finite trace $O\in\Sigma^*$,
  $O\prefix O'$ is the prefix relation on finite strings, and
  $\BSys=\ialphabet$.
  When monitoring hyperproperties such as HyperLTL, an observation is
  a finite set of finite traces $O\subset\Sigma^*$, that is, $\Obs =
  \fpowerset(\alphabet^*)$.
  The relation $\prefix$ is the prefix for finite sets of finite
  traces defined above.
  That is, $O\prefix P$ whenever for all $t\in O$ there is a
  $t'\in P$ such that $t\prefix t'$.
  Finally, $\BSys=\powerset(\ialphabet)$.
\end{example}

\noindent
We say that an observation $O \in \Obs$ \emph{permanently satisfies} a
formula $\varphi$, if every $B \in \BSys$ that extends $O$ satisfies
$\varphi$:
\[
O \sat \varphi \quad \text{iff} \quad \text{for all } B\in\BSys \text{ such that } O\prefix B \text{, } B\models\varphi
\]
where $\models$ denotes the satisfaction relation in the semantics of the
logic.
Similarly, we say that an observation $O \in \Obs$ \emph{permanently
  violates} a formula $\varphi$, if every extension
$B \in \BSys$ violates $\varphi$:
\[
 O \viol \varphi \quad \text{iff} \quad \text{for all } B\in\BSys \text{ such that } O\prefix B \text{, } B\not\models\varphi
\]
Monitoring a system for satisfaction (or violation) of a formula
$\varphi$ is to decide whether a finite observation permanently
satisfies (\resp violates) $\varphi$.
\begin{definition}[Semantic Monitorability]
  \label{def:sem-monitorability}
  A formula~$\varphi$ is {\em (semantically) monitorable} if every
  observation $O$ has an extended observation $P \extends O$,
  such that $P\sat \varphi$ or $P\viol \varphi$.
\end{definition}
A similar definition of monitorability only for satisfaction or only
for violation can be obtained by considering only $P \sat \varphi$
or only $P \viol \varphi$.
Instantiating this definition of monitorability for LTL and finite
traces as observations ($\Obs=\Sigma^*$ and $\BSys=\Sigma^\omega$)
leads to the classic definitions of monitorability for LTL by Pnueli
and Zaks~\cite{pnueli06psl}
(see also~\cite{havelund18runtime}).
Similarly, instantiating the definitions for HyperLTL and observations
as finite sets of finite traces leads to monitorability as introduced
by Agrawal and Bonakdarpour~\cite{agrawal16runtime}.

\begin{example}
  The LTL formula $\G\F a$ is not (semantically) monitorable
  since it requires an infinite-length observation, while formulas
  $\G a$ and $\F a$ are monitorable.
  Similarly, $\forall \pi.\forall \pi.\G (a_\pi \leftrightarrow \neg
  a_{\pi'})$ is monitorable, but $\forall \pi.\exists \pi.\G (a_\pi
  \leftrightarrow \neg a_{\pi'})$ is not, as it requires an
  observation set of infinite size.
  We will prove this claim in detail in
  \Sec{monitorability}.
\end{example}

\section{The Notion of Gray-box Monitoring}
\label{sec:monitorability}

Most of the previous definitions of monitorability make certain
assumptions:
\begin{inparaenum}[(1)]
\item the logics are trace logics, \ie do not cover hyperproperties,
\item the system under analysis is black-box in the sense that every
  further observation is possible,
\item the logics are tractable, in that the decision problems of
  satisfiability, liveness, etc.\ are decidable.
\end{inparaenum}
We present here a more general notion of monitorability by challenging
these assumptions.

\subsection{The Limitations of Monitoring Hyperproperties}
\label{sec:limitations}

\newcommand{\PPred}{F}

Earlier work on monitoring hyperproperties is restricted to the
quantifier alter\-nation-free fragment, that is either $\forall^*.\psi$
or $\exists^*.\psi$ properties.
We establish now an impossibility result about the monitorability of
formulas of the form $\forall \pi.\exists \pi'. \Always \PPred$, where
$\PPred$ is a state predicate.
That is, $\PPred$ is formed by atomic propositions, $a_\pi$ or $a_{\pi'}$
and Boolean combinations thereof, and can be evaluated given two
valuations of the propositions from $\AP$, one from each path $\pi$
and $\pi'$ at the current position.
For example, the predicate $\PPred = (a_\pi\leftrightarrow \neg
a_{\pi'})$ for $\AP=\{a\}$ depends on the valuation of $a$ at the
first state of paths $\pi$ and $\pi'$.
We use $v$ and $v'$ in $\PPred(v,v')$ to denote that $\PPred$ uses two
copies of the variables $v$ (one copy from $\pi$ and another from
$\pi'$).
A predicate $\PPred$ is \emph{reflexive} if for all valuations $v\in
2^\AP$, $\PPred(v,v)$ is true.
A predicate $\PPred$ is \emph{serial} if, for all $v$, there is a $v'$ such
that $\PPred(v,v')$ is true.

\newcounter{thm-monitorability}
\setcounter{thm-monitorability}{\value{theorem}}

\begin{theorem}
  \label{thm:monitorability}
  A HyperLTL formula of the form $\psi = \forall \pi.\exists \pi'.\Always
  \PPred$ is non-monitorable if and only if $\PPred$ is non-reflexive and
  serial.
\end{theorem}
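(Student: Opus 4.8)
The plan is to prove the two directions separately, keeping in mind the definition of semantic monitorability for HyperLTL (with $\Obs = \fpowerset(\alphabet^*)$ and $\BSys = \powerset(\ialphabet)$). Throughout, write $\PPred$ as a state predicate over two copies $v, v'$ of the propositional valuations.

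\emph{($\Leftarrow$) If $\PPred$ is non-reflexive and serial, then $\psi$ is non-monitorable.} Here I would show that no finite observation $O$ ever permanently satisfies or permanently violates $\psi$, which is even stronger than non-monitorability. First, non-permanent-violation: given any $O \in \fpowerset(\alphabet^*)$, I would build a model $T \supseteq$ (extensions of $O$) with $T \models \psi$. The idea is to extend each finite trace in $O$ and add fresh witness traces so that the existential witness for every trace is available; seriality of $\PPred$ is exactly what guarantees that a suitable companion valuation $v'$ can always be chosen at each position, so one can close the set $T$ under ``adding a $\PPred$-witness companion'' — e.g.\ take $T$ to be built by iterating the witness construction, or more simply observe that any $T$ all of whose traces are constant-valuation traces $v^\omega$ with $v$ ranging over a $\PPred$-closed set of valuations satisfies $\psi$ by seriality, and $O$ embeds into such a $T$ after suitably extending its traces. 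Second, non-permanent-satisfaction: given any $O$, I would exhibit an extension $B \models \neg\psi$. Since $\PPred$ is non-reflexive, pick a valuation $w$ with $\neg\PPred(w,w)$. Extend some trace $t \in O$ (or, if $O = \emptyset$, just take the singleton) to $t' = t\, w^\omega$ and take $B = \{t'\}$; then the $\forall$ quantifier forces $\pi = \pi' = t'$ eventually, and $\Always\PPred$ fails at the position where the valuation is $w$, since the only available witness is $t'$ itself. Hence $B \not\models \psi$. So neither $\sat{}$ nor $\viol{}$ can be reached from any $O$, giving non-monitorability.

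\emph{($\Rightarrow$) If $\PPred$ is reflexive, or $\PPred$ is not serial, then $\psi$ is monitorable.} This splits into two cases. If $\PPred$ is reflexive, then every $T$ satisfies $\psi$ (take $\pi' = \pi$), so $\psi$ is a tautology; then every observation $O$ already permanently satisfies $\psi$ (indeed $O \sat \psi$ with $P = O$), so $\psi$ is trivially monitorable. If $\PPred$ is not serial, fix a valuation $w$ with $\neg\PPred(w, v')$ for all $v'$. Then I claim any observation $O$ has an extension $P$ with $P \viol \psi$: pick any trace in $O$ (or add one if $O$ is empty), extend it by appending $w$ at some position, obtaining $P$. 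For \emph{every} behavior $B \in \BSys$ with $P \prefix B$, the $\forall$-quantified trace $\pi$ can be instantiated to the extension of that trace; at the position carrying valuation $w$ there is no valuation $v'$ making $\PPred$ true, so $\exists\pi'.\Always\PPred$ fails for that choice of $\pi$, hence $B \not\models \psi$. Thus $P \viol \psi$ and $\psi$ is monitorable (for violation). This also matches the claim in the earlier example that $\forall\pi.\exists\pi'.\G(a_\pi \leftrightarrow \neg a_{\pi'})$ is monitorable: that $\PPred$ is non-serial.

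\emph{Main obstacle.} The delicate direction is the non-permanent-satisfaction half of ($\Leftarrow$) together with making the witness constructions precise: one must be careful that appending $w^\omega$ (or $w$ followed by anything) to a trace in $O$ genuinely yields a valid behavior in $\BSys$ and a valid extension of $O$, and that restricting to the singleton $B = \{t'\}$ really forces $\pi' = \pi$ — this uses that $T$ is allowed to be arbitrarily small, so the existential witness is pinned down. The seriality-driven construction of a satisfying model also needs care to ensure it is genuinely an \emph{extension} of $O$ in the sense of $\prefix$ on $\fpowerset(\alphabet^*)$ (every trace of $O$ must have a prefix-extension in $T$), which is where one has to extend, not replace, the traces of $O$ before closing under adding companions.
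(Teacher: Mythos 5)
Your overall strategy coincides with the paper's: show that under non-reflexivity and seriality no observation can ever permanently satisfy or permanently violate $\psi$, and handle the converse via the reflexive case (where $\psi$ holds of every behavior by taking $\pi'=\pi$) and the non-serial case (append a valuation $w$ with $\neg\PPred(w,v')$ for all $v'$ to obtain a permanently violating observation). The ($\Rightarrow$) direction and the satisfying-extension half of ($\Leftarrow$) are essentially right --- though for the latter you can dispense with the closure construction entirely: when $\PPred$ is serial the full set $\alphabet^\omega$ is already a model of $\psi$ (for any $\pi$, build $\pi'$ position by position using seriality), and $\alphabet^\omega$ extends every observation. This is exactly what the paper does.

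There is, however, a genuine gap in the violating-extension half of ($\Leftarrow$). You take $B=\{t'\}$ with $t'=t\,w^\omega$ extending a \emph{single} trace $t\in O$. But $O\prefix B$ requires that \emph{every} trace of $O$ have an extension in $B$; if $O$ contains two prefix-incomparable traces, your singleton $B$ is not an extension of $O$ at all, so it cannot witness the failure of permanent satisfaction. And if you naively enlarge $B$ so that it does extend all of $O$, the step ``the only available witness is $t'$ itself'' collapses, because the existential quantifier may now choose a different trace whose valuation at the offending position is not $w$. The paper's resolution is to first pad all traces of $O$ to a common length $\ell$ and then append the valuation $w$ (with $\PPred(w,w)$ false) to \emph{all} of them before extending to infinite traces; then position $\ell$ of every trace in the resulting set carries $w$, so for any choice of $\pi$ and $\pi'$ the predicate evaluates to $\PPred(w,w)$ at that position and $\Always\PPred$ fails, whence the whole set violates $\psi$. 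You explicitly flag this as the ``main obstacle,'' but the construction you actually state does not close it.
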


\begin{proof}
  Let $\varphi$ be $\forall \pi\exists \pi'.\Always \PPred$.
  We first observe that if $\PPred$ is serial, then the universal set
  $\Sigma^\omega$ is a model of $\varphi$, i.e.
  $\Sigma^\omega\models\varphi$.
  We show the two directions separately.
  \begin{itemize}
  \item ``$\Leftarrow$''.
    Assume that $\PPred$ is non-reflexive and serial, and let $U$ be an
    arbitrary observation.
    We show an infinite extension of $U$ that violates $\varphi$ and
    another infinite extension of $U$ that satisfies $\varphi$,
    concluding that no observation has a finite extension that
    permanently satisfies or violates $\varphi$, that is, $\varphi$ is
    not monitorable.
    As mentioned above, since $\PPred$ is serial, $\Sigma^\omega$ is a
    model of $\varphi$ and $\Sigma^\omega$ extends $U$.
    Now, assume that all traces in $U$ have the same length
    (otherwise, extend the shorter traces arbitrarily).
    Then, pick $v$ such that $\PPred(v,v)$ is false (recall that $\PPred$ is
    non-reflexive so such a $v$ must exist), and consider the set of
    infinite observations
    $V = \{ uvt \;|\; u\in U, t\in\Sigma^\omega\}$.
    Since $v$ appears at the same position in all strings in $B$, it
    follows that $B\not\models\varphi$.
    \item ``$\Rightarrow$''.
    If $\PPred$ is reflexive then $\varphi$ holds for every non-empty set
    of infinite words by picking the same trace for $\pi$ and $\pi'$.
    Therefore $\varphi$ is monitorable (in fact, guaranteed to be
    permanently satisfied for any observation).
    Otherwise, assume that $\PPred$ is not serial, so for some $v$ and for all
    $v'$, $\PPred(v,v')$ is false.
    Consider an arbitrary observation $U$ and extend one $u\in U$ into
    $uv$.
    The observation obtained permanently violates $\varphi$ because
    taking $\pi$ to be $uv$ cannot be matched at the position where
    $v$ occurs by any trace for $\pi'$.
  \end{itemize}
  This finishes the proof.
  \qed
\end{proof}

The fragment of $\forall\exists$ properties captured by
\Thm{monitorability} is very general (and this result can be easily
generalized to $\forall^+\exists^+$ hyperproperties).
First, the temporal operator is just safety (the result can be
generalized for richer temporal formulas).
Also, every binary predicate can be turned into a non-reflexive
predicate by distinguishing the traces being related.
Moroever, many relational properties, such as non-interference and
DDM, contain a tacit assumption that only distinct traces are being
related.
Seriality simply establishes that $\PPred$ cannot be falsified by only
observing the local valuation of one of the traces.
Intuitively, a predicate that is not serial can be falsified by
looking only at one of the traces, so the property is not a proper
hyperproperty.
The practical consequence of \Thm{monitorability} is that many
hyperproperties involving one quantifier alternation cannot be
monitored.

\subsection{Gray-box Monitoring. Sound and Perfect Monitors}

To overcome the negative non-monitorability result, we exploit
knowledge about the set of traces that the system can produce
(gray-box or white-box monitoring).
Given a system that can produce the set of system behaviors $\Sys
\subseteq \BSys$, we parametrize the notions of permanent satisfaction
and permanent violation to consider only  behaviors in $\Sys$:
\begin{align*}
O \sat[\Sys] \varphi & \quad \text{ iff } \quad
\text{ for all $B \in \Sys$ such that $O\prefix B$, $B\models \varphi$}\\
O \viol[\Sys] \varphi & \quad \text{ iff } \quad
\text{ for all $B \in \Sys$ such that $O\prefix B$, $B\not\models \varphi$}
\end{align*}
First, we extend the definition of monitorability
(\Def{sem-monitorability} above) to consider the system under
observation.

\begin{definition}[Semantic Gray-Box Monitorability]
  \label{def:gb-monitorability}
  A formula~$\varphi$ is {\em semantically gray-box monitorable} for a
  system $\Sys$ if every observation $O$ has an extended observation
  $P \extends O$ in $\Sys$, such that $P\sat[\Sys] \varphi$ or
  $P\viol[\Sys] \varphi$.
\end{definition}

\noindent In this definition, monitors must now analyze and decide
properties of extended observations which is computationally not
possible with full precision for sufficiently rich system
descriptions.

We now introduce a novel notion of monitors that consider $\Sys$ and
the computational power of monitors (the diagonal dimension in
\Fig{cube}).
A \emph{monitor} for a property $\phi$ and a set of traces $\Sys$ is a
\emph{computable} function $M_{\Sys} \colon \Obs \to \set{\top,\bot,?}$ that,
given a finite observation $O$, decides a \emph{verdict} for $\phi$:
$\top$ indicates success, $\bot$ indicates failure, and $?$ indicates
that the monitor cannot declare a definite verdict given only $u$.
To avoid clutter, we write $M$ instead of $M_{\Sys}$ when the system
is clear from the context.
The following definition captures when a monitor for a property
$\varphi$ can give a definite answer.
\begin{definition}[Sound monitor]
  Given a property $\varphi$ and a set of behaviors $\Sys$, a monitor
  $M$ is {\em sound} whenever, for every observation $O \in \Obs$,
  \begin{enumerate}
  \item if $O \sat[\Sys] \phi$, then $M(O)=\top$ or $M(O)={}?$,
  \item if $O \viol[\Sys] \phi$, then $M(O)=\bot$ or $M(O)={}?$,
  \item otherwise $M(O)={}?$.
  \end{enumerate}
  \label{def:sound-monitor}
\end{definition}
If a monitor is not sound then it is possible that an extension of $O$
forces $M$ to change a $\top$ to a $\bot$ verdict, or vice-versa.
The function that always outputs $?$ is a sound monitor for any
property, but this is the least informative monitor.
A \emph{perfect monitor} precisely outputs whether satisfaction or
violation is inevitable, which is the most informative monitor.

\begin{definition}[Perfect Monitor]
  Given a property $\varphi$ and a set of traces $\Sys$, a monitor $M$ is
  {\em perfect} whenever, for every observation $O\in \Obs$,
  \begin{enumerate}
  \item if $O \sat[\Sys] \phi$ then $M(O)=\top$,
  \item if $O \viol[\Sys] \phi$ then $M(O)=\bot$,
  \item otherwise $M(O)={}?$.
  \end{enumerate}
  \label{def:perfect-monitor}
\end{definition}
Obviously, a perfect monitor is sound.
Similar definitions of perfect monitor only for satisfaction (\resp
violation) can be given by forcing the precise outcome only for
satisfaction (\resp violation).

A black-box monitor is one where every behavior is potentially
possible, that is $\Sys=\BSys$.
If the monitor uses information about the actual system, then we say
it is {\em gray-box} (and we use \emph{white-box} when the monitor can
reason with absolute precision about the set of traces of the system).
In some cases, for example to decide instantiations of a $\forall$
quantifier, a satisfaction verdict that is taken from $\Sys$ can be
concluded for all over-approximations (dually under-approximations for
violation and for $\exists$).
For space limitations, we do not give the formal details here.

Using \Def[Defs.]{sound-monitor} and\Def[]{perfect-monitor}, we
can add the computability aspect to capture a stronger definition of
monitorability.
Abusing notation, we use $O\in \Sys$ to say that the observation $O$
can be extended to a trace allowed by the system.

\begin{definition}[Strong Monitorability]
  \label{def:strong-monitorability}
  A property $\varphi$ is {\em strongly monitorable} for a system
  $\Sys$ if there is a sound monitor $M$ s.t.\ for all
  observations $O\in\Obs$, there is an extended observation
  $P\in \Sys$ for which either $M(P)=\top$ or $M(P)=\bot$.
\end{definition}
A property is strongly monitorable for satisfaction if the extension
with $M(P)=\top$ always exists (and analogously for violation).
In what follows we will use the term {\em monitorability} to refer to
strong monitorability whenever no confusion may arise.
It is easy to see that if a property is not semantically monitorable,
then it is not strongly monitorable, but in rich domains, some
semantically monitorable properties may not be strongly monitorable.
One trivial example is termination for deterministic programs (that
is, the halting problem).
Given a prefix of the execution of a deterministic program, either the
program halts or it does not, so termination is monitorable in the
semantics sense.
However, it is not possible to build a monitor that decides the
halting problem.
\begin{lemma}
  If $\varphi$ is strongly monitorable, then $\varphi$ is semantically
  monitorable.
\end{lemma}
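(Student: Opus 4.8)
The plan is to unfold both definitions and show that strong monitorability directly supplies the witness required by semantic monitorability. Fix a property $\varphi$ that is strongly monitorable for a system $\Sys$, and let $M$ be the sound monitor guaranteed by \Def{strong-monitorability}. Take an arbitrary observation $O \in \Obs$; I must produce an extension $P \extends O$ with $P \sat[\Sys] \varphi$ or $P \viol[\Sys] \varphi$. (Strictly, semantic monitorability as in \Def{sem-monitorability} is stated with $\sat$ and $\viol$ over all of $\BSys$; the intended reading in the gray-box setting — consistent with \Def{gb-monitorability} — is the $\Sys$-relative one, so I will prove the statement in that form, which is what the surrounding discussion uses.)

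First I would invoke strong monitorability on $O$ to obtain an extended observation $P \in \Sys$ with $M(P) = \top$ or $M(P) = \bot$. Now the soundness of $M$ (\Def{sound-monitor}) does the work: if $M(P) = \top$, then clause~2 of soundness rules out $P \viol[\Sys] \varphi$ and clause~3 rules out the "otherwise" case, so the only remaining possibility is $P \sat[\Sys] \varphi$; symmetrically, if $M(P) = \bot$, then clauses~1 and~3 force $P \viol[\Sys] \varphi$. In either case $P$ is an extension of $O$ that permanently satisfies or permanently violates $\varphi$ relative to $\Sys$, which is exactly what \Def{gb-monitorability} (the gray-box analogue of \Def{sem-monitorability}) demands. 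Since $O$ was arbitrary, $\varphi$ is semantically (gray-box) monitorable.

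The argument is essentially a one-step contrapositive of the classifying clauses of a sound monitor, so there is no real obstacle — the only subtlety worth flagging is the implicit universal quantification over $O$ and the fact that soundness is precisely the bridge that turns a \emph{computed} definite verdict into a \emph{semantic} permanent-satisfaction/violation fact. This is also why the converse fails (as the halting-problem example in the text shows): semantic monitorability gives a semantic witness $P$ but no computable $M$ that recognizes it, so one cannot run the implication backwards.
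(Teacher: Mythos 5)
Your proof is correct; the paper in fact states this lemma without proof, and your unfolding — strong monitorability supplies an extension $P$ with a definite verdict, and clauses 1--3 of \Def{sound-monitor} (in particular the contrapositive of clause 3) convert $M(P)=\top$ or $M(P)=\bot$ into $P \sat[\Sys] \varphi$ or $P \viol[\Sys] \varphi$ — is exactly the intended one-step argument. Your remark that the conclusion must be read in the $\Sys$-relative sense of \Def{gb-monitorability} rather than the $\BSys$-relative \Def{sem-monitorability} is a legitimate and correctly resolved ambiguity in the paper's statement.
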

A property may not be monitorable in a black-box manner, but
monitorable in a gray-box manner.
In the realm of monitoring of LTL properties, strong and semantic
monitorability coincide for finite state systems
(see~\cite{zhang12runtime}) both black-box and gray-box (for finite
state systems), because model-checking and the problem of deciding
whether a state of a B\"uchi automaton is live are decidable.

Following~\cite{bss18} we propose to use a combination of static
analysis and runtime verification to monitor violations of
$\forall^+\exists^+$ properties (or dually, satisfactions of
$\exists^+\forall^+$).
The main idea is to collect candidates for the outer
$\exists$ part dynamically and use static analysis at runtime to
over-approximate the inner $\forall$ quantifiers.

\section{Monitoring Distributed Data Minimality}
\label{sec:monitoring}

In this section, we describe how to monitor DDM, which can be expressed
as a hyperproperty of the form $\forall^+\exists^+$.
The negative non-monitotabiliy result from \Sec{limitations} can be
generalized to $\forall^+\exists^+$ hyperproperties.
In the particular case of DDM, although we mainly deal with the
input/output relation of functions and are not concerned with infinite
temporal behavior, we still need to handle possibly infinite set
extensions $\Sys$ for black-box monitoring.
In the remainder of this section, we discuss the following, seemingly
contradictory aspects of DDM:
\begin{enumerate}[(P1)]
\item DDM is not semantically \emph{black-box} monitorable,
  \label{p:bb-nonmon}
\item DDM is semantically \emph{white-box} monitorable (for programs
  that are not DDM),
  \label{p:wb-mon}
\item checking DDM statically is undecidable,
  \label{p:undec}
\item DDM is strongly gray-box monitorable for violation, and we give
  a \emph{sound monitor}.
  \label{p:smon}
\end{enumerate}
The apparent contradictions are resolved by careful analysis of DDM
along the different dimensions of the monitorability cube
(\Fig{cube}).

We will show how to monitor DDM and similar hyperproperties using a
gray-box approach.
In our approach, a monitor can decide at run time the existence of
traces using a limited form of static analysis.
The static analyzer receives the finite observation $O$ collected by
the monitor, but not the future system behavior.
Instead it must reason under the assumption that any system behavior
in $\Sys$ that is compatible with $O$, may eventually occur.
For example, given an $\exists\forall$ formula, the outer existential
quantifier is instantiated with a concrete set $U$ of runtime traces,
while possible extensions of $U$ provided by static analysis can be
used to instantiate the inner universal quantifier.

\subsection{DDM Preliminaries}
\label{sec:ddm-prelim}

We briefly recapitulate the formal notion of data-minimality
from~\cite{ASS17dm}.
Given a function $f \colon I \to O$, the problem of data minimization
consists in finding a \emph{preprocessor} function $p\colon I \to I$,
such that $f=f\comp p$ and $p=p\comp p$.
The goal of $p$ is to limit the information available to $f$ while
preserving the behavior of $f$.

There are many possible such preprocessors (\eg the identity
function), which can be ordered according to the information they
disclose, that is, according to the
subset relation on their \emph{kernels}.
The kernel $\ker(p)$ of a function $p$ is defined as the equivalence
relation $(x, y) \in \ker(p) \text{ iff } p(x) = p(y)$.
The smaller $\ker(p)$ is, the more information $p$ discloses.
The identity function is the worst preprocessor since it discloses all
information (its kernel is equality --- the least equivalence
relation).
An optimal preprocessor, or \emph{minimizer}, is one that discloses
the least amount of information.

A function $f$ is \emph{monolithic data-minimal} (MDM), if it fulfills
either of the following equivalent conditions:
\begin{enumerate}
\item the identity function is a minimizer for $f$,
\item $f$ is injective.
\end{enumerate}
Condition~1.\ is an information-flow-based characterization
that can be generalized to more complicated settings in a straightforward
fashion.
Condition~2.\ is a purely logical or \emph{data-based} characterization
more suitable for implementation in e.g.\ a monitor.

MDM is the strongest form of data minimality, where one assumes that
all input data is provided by a single source and thus a single
preprocessor can be used to minimize the function.
If inputs are provided by multiple sources (called a distributed
setting) and access to the system implementing $f$ is restricted, it
might be impossible to use a single preprocessor.
For example, consider a web-based auction system that accepts bids
from $n$ bidders, represented by distinct input domains
$I_1, \dotsc, I_n$, and where concrete bids $x_k \in I_k$ are
submitted remotely.
The auction system must compute the function
\( m(x_1, \dotsc, x_n) = \max_k \set{ x_k } \),
which is clearly non-injective and, hence, non-MDM.
In this case, a single, monolithic minimizer cannot be used since
different bidders need not have any knowledge of each other's bids.
Instead, bidders must try to minimize the information contained in
their bid locally, in a distributed way, before submitting it to the
auction.

The problem of \emph{distributed data minimization} consists in
building a collection $p_1, \dotsc, p_n$ of $n$ independent
preprocessors $p_k \colon I_k \to I_k$ for a given function
$f \colon I_1 \times \dotsm \times I_n \to O$,
such that their parallel composition
$p(x_1, \dotsc, x_n) = (p_1(x_1), \dotsc, p(x_n))$ is a preprocessor
for $f$.
Such composite preprocessors are called \emph{distributed}, and
a distributed preprocessor for $f$ that discloses the least amount of
information is called a \emph{distributed minimizer} for $f$.
Then, one can generalize the (information-flow) notion of
data-minimality to the distributed setting as follows.
The function $f$ is \emph{distributed data-minimal} (DDM) if the
identity function is a distributed minimizer for $f$.
Returning to our example, the maximum function $m$ defined above is
DDM.
As for MDM, there is an equivalent, data-based characterization of
DDM defined next.
\begin{definition}[distributed data minimality~\cite{ASS17dm,PASS18corr}]
  \label{def:datamin}
  A function $f$ is \emph{distributed data-minimal (DDM)} if, for all
  input positions $k$ and
  all $x, y \in I_k$ such that $x \neq y$, there is some
  $z \in I$, such that $f(z[k \mapsto x]) \neq f(z[k \mapsto y])$.
\end{definition}
We use \Def{datamin} to explore how to monitor DDM.
In the following, we assume that the function $f\colon I_1 \times
\dotsm \times I_n \to O$ has at least two arguments ($n \geq 2$).
Note that for unary functions, DDM coincides with MDM.
Since MDM is a $\forall^+$-property (involving no quantifier
alternations), most of the challenges to monitorability discussed here
do not apply~\cite{PSS18rvh}.
We also assume, without loss of generality, that the function $f$
being monitored has only nontrivial input domains, \ie
$\abs{I_k} \geq 2$ for all $k = 1, \dotsc n$.
If $I_k$ is trivial then this constant input can be ignored.
Finally, note that checking DDM statically is undecidable
\Pty{undec} for sufficiently rich programming languages
\cite{ASS17dm}.

\subsection{DDM as a Hyperproperty}
\label{sec:dmhyp}

We consider data-minimality for total functions $f \colon I \to O$.
Our alphabet, or set of events, is the set of possible
\emph{input-output (I/O) pairs} of $f$, \ie
$\Ev[f] = I \times O$.
Since a single I/O pair $u = (\IN{u}, \OUT{u}) \in \Ev[f]$ captures an
entire run of $f$, we restrict ourselves to observing singleton
traces, \ie traces of length $\abs{u} = 1$.
In other words, we ignore any temporal aspects associated with the
computation of $f$.
This allows us to use first-order predicate logic --- without any
temporal modalities --- as our specification logic.

DDM is a hyperproperty, expressed as a predicate over sets of traces,
even though the traces are I/O pairs.
The set of observable behaviors $\Obs[f]$ of a given $f$ consists of
all \emph{finite sets} of I/O pairs $\Obs[f] = \fpowerset(\Ev[f])$.
The set of all possible system behaviors
$\BSys[f] = \powerset(\Ev[f])$ additionally includes \emph{infinite
  sets} of I/O pairs.

\begin{example}\label{ex:addition}
  Let $f\colon \NN \times \NN \to \NN$ be the addition function on
  natural numbers, $f(x, y) = x + y$.
  Then $I = \NN \times \NN$, $O = \NN$, and a valid trace
  $u \in \Ev[f]$ takes the form $u = ((x, y), z)$,	 where $x$, $y$ and
  $z$ are all naturals.
  Both $U = \set{((1, 2), 3), ((2, 1), 3)}$ and
  $V = \set{((1, 1), 3)}$ are considered observable behaviors
  $U, V \in \Obs[f]$, even though $V$ does not correspond to a valid
  system behavior since $f(1, 1) \neq 3$.
  Remember that we do not discriminate between valid and invalid
  system behaviors in a black-box setting.
\end{example}

We now express DDM as a hyperproperty, using HyperLTL, but with only
state predicates (no temporal operators).
Given a tuple $x = (x_1, x_2, \dotsc, x_n)$, we write $\proj_i(x)$ or
simply $x_i$ for its $i$-th projection.
Given an I/O pair $u = (x, y)$ we use $\IN{u}$ for the input component
and $\OUT{u}$ for the output component (that is $\IN{u}=x$ and
$\OUT{u}=y$).
Given trace variables $\uvar, \vvar$, we define
\begin{align*}
  \operatorname{output}(\uvar, \vvar) &\DefinedAs \uout = \vout
  && \text{$\uvar$ and $\vvar$ agree on their output,}\\
  \operatorname{same}_i(\uvar, \vvar) &\DefinedAs \proj_i(\uin) = \proj_i(\vin)
  && \text{$\uvar$ and $\vvar$ agree on the $i$-th input,}\\
  \operatorname{almost}_i(\uvar, \vvar) &
  \DefinedAs \bigand_{k \neq i} \proj_k(\uin) = \proj_k(\vin)
  && \text{\parbox{.4\textwidth}{%
     $\uvar$ and $\vvar$ agree on all but the\\
     \phantom{$\uvar$ and $\vvar$} $i$-th input}}
\end{align*}

\begin{example}
  Let $u = ((1, 2), 3)$, $u' = ((2, 1), 3)$, and
  $\Pi = \set {\uvar \mapsto u, \vvar \mapsto u'}$.
  Then $\Pi \models \operatorname{output}(\uvar, \vvar)$, but
  $\Pi \not\models \operatorname{same}_1(\uvar, \vvar)$ and
  $\Pi \not\models \operatorname{almost}_1(\uvar, \vvar)$.
\end{example}

We define DDM for input argument $i$ as follows:
\begin{align*}
  \dmHypSub{i} &\; = \;
  \forall \uvar.\forall \vvar.\exists \pvar.\exists \qvar.\;
  \neg\operatorname{same}_i(\uvar, \vvar) \Into
  \left(\begin{aligned}
      \operatorname{same}_i(\uvar, \pvar) &\land
      \operatorname{same}_i(\vvar, \qvar) \; \land \\
      \operatorname{almost}_i(\pvar, \qvar) &\land
      \neg\operatorname{output}(\pvar, \qvar)
    \end{aligned}\right)
\end{align*}
In words: given any pair of traces $\uvar$ and $\vvar$, if $\uin$ and
$\vin$
differ in their $i$-th position, then there must be some
common values $z$ for the remaining inputs, such that the outputs of
$f$ for $\pin = z[i \mapsto \proj_i(\uin)]$ and $\qin = z[i \mapsto
\proj_i(\vin)]$ differ.
Note that $z$ does not appear in $\dmHypSub{i}$ directly, instead it is
determined implicitly by the (existentially quantified) traces $\pvar$
and $\qvar$.
Finally, \emph{distributed data minimality} for $f$ is defined as
$$\dmHyp \; = \; \bigand_{i = 1}^n \varphi_i.$$

The property $\dmHyp$ follows the same structure as the logical
characterization of DDM from \Sec{ddm-prelim}.
The universally quantified variables range over the possible inputs at
position $i$, while the existentially quantified variables $\pvar$ and $\qvar$
range over the other inputs and the outputs.
Note also that, given the input coordinates of $\uvar$, $\vvar$, and $\pvar$, all
the output coordinates, as well as the input coordinates of $\qvar$, are
uniquely determined.\footnote{For simplicity, even though $\dmHyp$ is
  not in prenex normal form, it is a finite conjunction of
  $\forall\forall\exists\exists$ formulas in prenex normal form so a
  finite number of monitors can be built and executed in parallel, one
  per input argument.}

\begin{example}
  Consider again $U = \set{((1, 2), 3), ((2, 1), 3)}$ and
  $V = \set{((1, 1), 3)}$ from \Ex{addition}.
  Then, $V \models \dmHyp$ trivially holds, but $U \not\models \dmHyp$
  because when $\Pi(\uvar) \neq \Pi(\vvar)$ there is no choice of
  $\Pi(\pvar), \Pi(\qvar) \in U$ for which
  $\Pi \models \neg\operatorname{output}(\pvar, \qvar)$ holds.
\end{example}
Note that, in the above example, $V \models \dmHyp$ holds despite the
fact that $V$ is not a valid behavior of the example
function~$f(x, y) = x + y$.
Indeed, whether or not $U \models \dmHyp$ holds for a given $U$ is
independent of the choice of $f$.
In particular,
$\Ev[f] \models \dmHyp$, for any choice of $f$ regardless of whether
$f$ is data-minimal or not.
This is already a hint that the notion of semantic black-box
monitorability is too weak to be useful when monitoring $\dmHyp$.
Since $\Ev[f]$ is a model of $\dmHyp$, no observation $U$ can have an
extension that permanently violates $\dmHyp$.
As we will see shortly, gray-box monitoring does not suffer from this
limitation.
Monitorability of DDM for violations becomes possible once we exclude
potential models such as $\Ev[f]$ which do not correspond to valid
system behaviors.

\paragraph{Remark.}
Note that though our definition and approach work for general
(reactive) systems, the DDM example is admittedly a non-reactive
system with traces of length 1.
This, however, is not a limitation of the approach.
Extending DDM for reactive systems is left as future work.

\subsection{Properties of DDM}

Since $\dmHyp$ is a $\forall^+\exists^+$ property, it should not come
as a surprise that it is \emph{not} semantically black-box monitorable
in general \Pty{bb-nonmon}.

\begin{lemma}[black-box non-monitorability]\label{lem:bb-nonmon}
  Assume $f\colon I \to O$, then $\dmHyp$ is semantically black-box
  monitorable iff $I$ is finite.
\end{lemma}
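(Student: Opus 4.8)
The plan is to prove the biconditional in Lemma~\ref{lem:bb-nonmon} by establishing the two directions separately, using the structure of $\dmHyp$ as a $\forall^+\exists^+$ property.

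For the direction ``$I$ infinite $\implies$ $\dmHyp$ not semantically black-box monitorable'', I would first recall the observation already made in the text: since $\PPred$-style seriality holds for $\dmHyp$ in the black-box setting, the universal behavior $\Ev[f]$ is always a model of $\dmHyp$ for any $f$. Hence no observation $U$ can have an extension that permanently \emph{violates} $\dmHyp$: every observation extends to the model $\Ev[f] \in \BSys[f]$. So it remains to show no observation has an extension that permanently \emph{satisfies} $\dmHyp$ when $I$ (hence some $I_k$) is infinite. Given an arbitrary finite observation $U$, I would exhibit a behavior $B \extends U$ with $B \not\models \dmHyp$. Since $I_k$ is infinite, pick fresh distinct inputs $x \neq y$ at coordinate $k$ that are ``not yet witnessed'' in $U$, together with a common context $z$ for the other coordinates, and let $B$ extend $U$ by adding I/O pairs that force a failure of $\varphi_k$: concretely, ensure $B$ contains traces with $\proj_k(\IN{\cdot}) = x$ and $\proj_k(\IN{\cdot}) = y$ but assign all I/O pairs in $B$ with context $z[k \mapsto x]$ or $z[k \mapsto y]$ the \emph{same} output, so that $\neg\operatorname{output}(\pvar,\qvar)$ can never be satisfied for that pair $(x,y)$. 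This $B$ still extends $U$ (we only added traces, and never altered prefixes, since traces have length $1$), and $B \not\models \dmHyp$. Combined with the fact that $U \prefix \Ev[f] \models \dmHyp$, no extension of $U$ is permanently satisfying or permanently violating, so $\dmHyp$ is non-monitorable.

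For the converse, ``$I$ finite $\implies$ $\dmHyp$ semantically black-box monitorable'', note that if $I = I_1 \times \dotsm \times I_n$ is finite then $\Ev[f] = I \times O$ may still be infinite if $O$ is infinite, but the number of \emph{distinct inputs} is finite. The key point is that whether $U \models \dmHyp$ depends only on which input tuples appear in $U$ and which outputs they are paired with, and an observation can eventually contain (trace extensions reaching) every input tuple in $I$. I would argue: given any observation $U$, extend it to an observation $P \extends U$ that includes, for every input tuple $x \in I$, at least one trace whose input is $x$ (this is a finite set of additions since $I$ is finite). At that point, for any further black-box extension $B \extends P$, the truth of $\dmHyp$ on $B$ is \emph{determined} --- more carefully, one shows that once all input tuples are present, either the existential witnesses required by each $\varphi_i$ are already present in $P$ (in which case $P \sat \dmHyp$, since adding traces only adds more potential witnesses and never invalidates an $\forall\forall\exists\exists$ instance whose witnesses lie inside $P$), or some universal pair $(\uvar,\vvar)$ instantiable within $P$ has no witness and \emph{cannot ever get one}, because the witnessing traces $\pvar,\qvar$ would need specific inputs that are already in $P$ with fixed outputs, so $\neg\operatorname{output}(\pvar,\qvar)$ is already decided --- giving $P \viol \dmHyp$. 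Hence every $U$ has a monitorable extension.

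The main obstacle I expect is the converse direction, specifically the claim that once $P$ contains all input tuples, the verdict is permanent. The subtlety is that $\dmHyp$'s existential quantifiers range over \emph{traces}, not just inputs, and in a black-box setting a given input tuple may be paired with different outputs in different traces of $B$; adding traces can therefore introduce a trace with input $\pin = z[i\mapsto\proj_i(\uin)]$ whose output differs from one already present, potentially \emph{creating} a witness that falsifies what looked like permanent satisfaction --- or the reverse. So I must be careful: the correct monitorable extension is not merely ``all inputs present'' but ``all inputs present \emph{and}, for the decisive universal pairs, enough candidate output-pairs present to settle $\neg\operatorname{output}$''. Since $I$ is finite, the relevant set of input tuples to populate is finite, but the argument that a finite such extension suffices --- rather than needing infinitely many output variants --- requires noting that for \emph{violation} it is enough that \emph{some} pair $(x,y)$ at some coordinate $i$ is such that all traces in $P$ with the two critical inputs share their output, which is a closed (prefix-stable under the singleton-trace extension order only if we also forbid re-pairing) condition; and for \emph{satisfaction} we need all critical witnesses inside $P$. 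Handling this cleanly --- and in particular observing that the black-box extension order $\prefix$ on $\fpowerset(\Ev[f])$ allows adding traces but, because traces have length $1$, never modifies an existing I/O pair --- is the crux, and I would spell out exactly why a finite $P$ can be chosen to make the verdict stable.
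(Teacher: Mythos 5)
Your infinite-$I$ direction is essentially the paper's argument and is correct: $\Ev[f]$ is a model of $\dmHyp$ for every $f$ (killing permanent violation), and adding two traces with fresh, distinct $i$-th inputs, a common context, and a \emph{common} output yields an extension falsifying $\dmHypSub{i}$, because those two traces are the only admissible candidates for $\pvar,\qvar$ and they agree on their output. This is exactly the paper's pair of extensions $T_s = \Ev[f]$ and $T_v = U \cup \set{v, v'}$.

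The gap is in the finite-$I$ direction. The second branch of your dichotomy --- that some universal pair instantiable in $P$ ``has no witness and cannot ever get one, because the witnessing traces would need specific inputs that are already in $P$ with fixed outputs'' --- fails in the black-box setting: $\BSys[f] = \powerset(I \times O)$ contains behaviors in which the same input tuple is paired with several distinct outputs, so any extension of $P$ may re-pair the critical inputs with a fresh output and thereby \emph{create} the missing witness. (Your parenthetical ``only if we also forbid re-pairing'' concedes exactly this, and black-box monitoring is precisely the setting where re-pairing is not forbidden.) Hence permanent violation is never attainable here, and your argument must go entirely through the satisfaction branch --- which your construction (``at least one trace per input tuple'') does not secure, since all added traces could share a single output, leaving no pair $\pvar,\qvar$ in $P$ with $\neg\operatorname{output}(\pvar,\qvar)$. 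The repair is short and is in substance what the paper does: fix two outputs $o \neq o'$ and take $P = U \cup (I \times \set{o, o'})$, which is finite because $I$ is. For any $B \extends P$ and any $\uvar,\vvar$ in $B$ differing at position $i$, the witnesses $(z[i \mapsto \proj_i(\uin)], o)$ and $(z[i \mapsto \proj_i(\vin)], o')$ already lie in $P \subseteq B$ for any context $z$, so $P \sat \dmHyp$. (The paper itself takes $V = I \times O$ wholesale, which is a finite observation only when $O$ is finite; your remark that $O$ may be infinite is a point where you are more careful than the paper, and the two-output version of $P$ handles it.)
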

\begin{proof}
  We first treat the case where $I$ is finite.
  Assume $I$ and $O$ each contain at least two elements.
  Smaller I/O domains correspond to degenerate cases for which
  semantic black-box monitorability is easy to show, so we omit them
  here.

  Let $U \subseteq \Obs$ be a finite set of traces.
  We need to show that there is a finite extension $V \extends U$ that
  permanently satisfies or violates $\dmHyp$.
  Pick $V = \Ev[f] = I \times O$.
  Clearly, this is the largest observation in $\Obs$, so any property
  satisfied by $V$ is also permanently satisfied by $V$.
  Hence it suffices to show that $V \sat \dmHyp$.

  Let $u$, $u'$, $w$ be arbitrary I/O pairs, $o \neq o' \in O$ a pair
  of distinct outputs, and $i$ an arbitrary input position.
  Define $v = (\IN{w}[i \mapsto \proj_i(\IN{u})], o)$ and
  $v' = (\IN{w}[i \mapsto \proj_i(\IN{u'}), o')$.
  Then $u$, $u'$ and $v$, $v'$ are all in $V$, and it is easy to check
  that $\dmHypSub{i}$ holds if the quantified variables are
  instantiated to these traces in the given order.
  In other words $V \sat \dmHypSub{i}$ for all $i$, and hence $V$
  permanently satisfies $\dmHyp$.

  Conversely, assume that $I$ is infinite, and let $U$ again be a
  finite set of traces.
  To show that $U$ neither permanently satisfies nor permanently
  violates $\dmHyp$, it is sufficient to exhibit a pair of extensions
  $T_s, T_v \extends U$ that satisfy and violate $\dmHyp$,
  respectively.
  For $T_s$, we pick $T_s = \Ev[f] = I \times O$.
  By the same argument as given above (for the finite case), we have
  $T_s \sat \dmHyp$.

  We have to work slightly harder to construct $T_v$.
  Since $I$ is infinite but $U$ is finite, there must be an input
  position $i$ and a pair of distinct elements $x \neq x' \in I_i$
  such that no trace in $U$ has $x$ or $x'$ as its $i$-th input.
  Pick some arbitrary trace $w \in \Ev[f]$, and let
  $v = w[i \mapsto x]$ and $v' = w[i \mapsto x']$.
  By construction, $v, v' \notin U$, so $T_v = U \cup \set{ v, v' }$
  is a strict extension of $U$.
  To show that $T_v$ does indeed violate $\dmHyp$, it is sufficient to
  show that $T_v \viol \dmHypSub{i}$.
  Pick $v, v'$ to instantiate $\uvar$ and $\vvar$.
  Then $\proj_i(\IN{w}) = x \neq x' = \proj_i(\IN{w'})$ by
  construction, but there is no way to instantiate $\pvar$ and
  $\qvar$:
  since they have to agree with $\uvar$ and $\vvar$ on the $i$-th
  input position, the only candidates are $v$ and $v'$, but
  $\OUT{v} = \OUT{v'}$ by construction.
  \qed
\end{proof}

Perhaps surprisingly, $\dmHyp$ is semantically \emph{white-box}
monitorable for violations \Pty{wb-mon}.
That is, if $f$ is not DDM, there is hope to detect it.
To make this statement more precise, we first need to identify the set
of valid system behaviors $\Sys[f]$ of $f$.
We define $\ObsEv[f] = \setof{(x, y)}{f(x) = y}$ to be the set of I/O
pairs that correspond to executions of $f$.
Then $\Sys[f] = \powerset(\ObsEv[f])$ precisely characterizes the set of
valid system behaviors.

\begin{example}
  Define $g \colon \NN \times \NN \to \NN$ as $g(x, y) = x$, \ie $g$
  simply ignores its second argument.
  Then $\ObsEv[g] = \setof{ ((x, y), x) }{ x, y \in \NN }$.
  It is easy to show that DDM is white-box monitorable for $g$.
  Any finite set of valid traces $U$ can be extended to include a pair
  of traces $u, u'$ that only differ in their second input value, \eg
  $u = ((1, 1), 1)$ and $u' = ((1, 2), 1)$.
  Now, consider any $T \in \Sys[f]$ that extends
  $U \cup \set{ u, u' }$.
  Clearly, $T$ cannot contain any trace $v$ for which
  $\proj_1(\IN{v}) = 1$ but $\OUT{v} \neq 1$ as that would constitute
  an invalid system behavior.
  But $T$ would have to contain such a trace to be a model of
  $\dmHypSub{2}$.
  Hence, $T \not\models \dmHyp$ for any such $T$, which means
  $U \cup \set{ u, u' }$ permanently violates $\dmHyp$.
\end{example}
Note the crucial use of information about $g$ in the above example:
it is the restriction to \emph{valid} extensions $T \in \Sys[f]$ that
excludes trivial models such as $\Ev[f]$ and thereby restores
(semantic) monitorability for violations.
The apparent conflict between \Pty{bb-nonmon} and \Pty{wb-mon} is thus
resolved.

With the extra information that gray-box
monitoring affords, we can make more precise claims about properties
like DDM:
whether or not a property is monitorable may, for instance, depend on
whether the property actually holds for the system under scrutiny.
Concretely, for the case of DDM, we show the following.

\newcounter{thm-weak-mon}
\setcounter{thm-weak-mon}{\value{theorem}}

\begin{theorem}\label{thm:weak-mon}
  Given a function $f\colon I \to O$, the formula $\dmHyp$ is
  semantically gray-box monitorable in $\Sys[f]$ if and only if either
  $f$ is distributed non-minimal or the input domain $I$ is finite.
\end{theorem}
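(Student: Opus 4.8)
The plan is to single out ``$f$ distributed data-minimal and $I$ infinite'' as the unique non-monitorable case, treating the other three cases constructively for the ``if'' direction and this one case (via its contrapositive) for the ``only if'' direction. I will use the standing assumptions of \Sec{monitoring} ($n \ge 2$ and each $|I_k| \ge 2$) and the structure of $\Sys[f] = \powerset(\ObsEv[f])$: it has $\ObsEv[f]$ as its $\subseteq$-greatest element, and since every trace has length $1$ the set-prefix order $\prefix$ is just $\subseteq$, so ``extending'' an observation means adding further valid I/O pairs.

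For ``$\Leftarrow$'' I split into two sub-cases. If $I$ is finite, then $\ObsEv[f]$ is a finite set, hence a legal observation; given any observation $U \subseteq \ObsEv[f]$ I extend it to $\ObsEv[f]$, and since $\ObsEv[f]$ is the greatest behavior in $\Sys[f]$ its only extension in $\Sys[f]$ is itself, so it permanently satisfies $\dmHyp$ if $\ObsEv[f] \models \dmHyp$ and permanently violates it otherwise --- either way a conclusive verdict exists. If $f$ is distributed non-minimal, then by \Def{datamin} there are a coordinate $k$ and distinct $x, y \in I_k$ with $f(z[k \mapsto x]) = f(z[k \mapsto y])$ for all $z \in I$. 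Given any observation $U$, I fix some $z_0 \in I$ and add the two valid pairs $v = (z_0[k \mapsto x],\, f(z_0[k \mapsto x]))$ and $v' = (z_0[k \mapsto y],\, f(z_0[k \mapsto y]))$, for which $\OUT{v} = \OUT{v'}$. I then claim $U \cup \{v, v'\}$ permanently violates $\dmHyp$: in any valid $T \extends U \cup \{v,v'\}$, instantiating the outer $\forall\forall$ of $\dmHypSub{k}$ with $v, v'$, any witnesses $\pvar, \qvar$ meeting $\operatorname{same}_k(v,\pvar)$, $\operatorname{same}_k(v',\qvar)$ and $\operatorname{almost}_k(\pvar,\qvar)$ must satisfy $\IN{\pvar} = w[k \mapsto x]$ and $\IN{\qvar} = w[k \mapsto y]$ for a common $w$, whence validity in $\Sys[f]$ together with the non-minimality witness forces $\OUT{\pvar} = f(w[k \mapsto x]) = f(w[k \mapsto y]) = \OUT{\qvar}$, contradicting $\neg\operatorname{output}(\pvar,\qvar)$.

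For ``$\Rightarrow$'' I prove the contrapositive: if $f$ is distributed data-minimal and $I$ is infinite, then $\dmHyp$ is not semantically gray-box monitorable in $\Sys[f]$. It suffices to show (a) no valid observation permanently violates $\dmHyp$, and (b) no valid observation permanently satisfies $\dmHyp$; then no observation has an extension doing either. For (a): since $f$ is DDM, \Def{datamin} gives directly that $\ObsEv[f] \models \dmHyp$ --- for any two valid traces differing in coordinate $i$, the $z$ supplied by \Def{datamin} yields valid witnesses $\pvar,\qvar$ with distinct outputs --- and $\ObsEv[f]$ extends every valid observation, so none can permanently violate. For (b): since $I$ is infinite and $n$ is finite, some $I_i$ is infinite; given a finite valid observation $U$, I pick two \emph{fresh} distinct values $a \ne b \in I_i$ occurring as the $i$-th input of no trace in $U$, choose a coordinate $j \ne i$ (this is where $n \ge 2$ enters) together with some $w \in I$ such that $\proj_i(w) = a$ and some $c \in I_j$ with $c \ne \proj_j(w)$, and add the valid pairs $\uvar = (w, f(w))$ and $\vvar = (w[i \mapsto b][j \mapsto c],\, f(w[i \mapsto b][j \mapsto c]))$. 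In $U \cup \{\uvar,\vvar\}$ the only trace whose $i$-th input is $a$ (\resp $b$) is $\uvar$ (\resp $\vvar$), so instantiating the outer $\forall\forall$ of $\dmHypSub{i}$ with $\uvar, \vvar$ the sole candidate witness pair is $(\uvar,\vvar)$ itself, which fails $\operatorname{almost}_i$ since $\IN{\uvar}$ and $\IN{\vvar}$ disagree at $j \ne i$; hence $U \cup \{\uvar,\vvar\} \in \Sys[f]$ is an extension of $U$ that violates $\dmHyp$, so $U$ does not permanently satisfy it.

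I expect part (b) of ``$\Rightarrow$'' to be the main obstacle: one must produce a \emph{valid} extension --- so the outputs of the two added pairs are whatever $f$ prescribes and cannot be chosen freely --- that nonetheless falsifies $\dmHyp$, and the device that makes this work, namely using fresh $i$-inputs to pin the existential witnesses of $\dmHypSub{i}$ to the two new pairs and then separating those pairs on a second coordinate to break $\operatorname{almost}_i$, is exactly where both hypotheses ``$I$ infinite'' and ``$n \ge 2$'' get consumed; some care is also needed to check that the added pairs are genuinely new (so the extension is legitimate) and that the ``sole candidate'' argument is airtight.
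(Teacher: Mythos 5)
Your proof is correct and follows essentially the same route as the paper, which establishes the theorem via two auxiliary lemmas using exactly your constructions: extending to $\ObsEv[f]$ when $I$ is finite, adding the two-trace non-minimality witness to force permanent violation when $f$ is not DDM, and the fresh-$i$-inputs-plus-second-coordinate device to defeat permanent satisfaction when $f$ is DDM and $I$ is infinite. The one point where you go slightly beyond the paper's text is your step (a) --- that $\ObsEv[f]\models\dmHyp$ for DDM $f$ rules out permanent violation of any valid observation --- which the paper needs to conclude full non-monitorability in that case but leaves implicit.
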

\noindent \Thm{weak-mon} follows from the following two auxiliary
lemmas.
\begin{lemma}[semantic violation]\label{lem:weak-viol}
  If $f$ is not DDM, then $\dmHyp$ is semantically monitorable for
  violation \textup{(}in $\Sys[f]$\textup{)}.
\end{lemma}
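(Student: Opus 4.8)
The plan is to show that if $f$ is not DDM, then we can always extend any finite observation to a finite observation that permanently violates $\dmHyp$ within the system behaviors $\Sys[f] = \powerset(\ObsEv[f])$. First I would unpack the negation of DDM via \Def{datamin}: since $f$ is not DDM, there exist an input position $i$ and a pair $x \neq x' \in I_i$ such that for \emph{all} $z \in I$ we have $f(z[i \mapsto x]) = f(z[i \mapsto x'])$. This is the witness of non-minimality that the monitor should eventually find; the task is to package it into a finite set of valid traces whose presence forbids all valid extensions from satisfying $\dmHypSub{i}$.

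\medskip
\noindent
Given an arbitrary finite observation $U \in \Obs[f]$ (a finite set of I/O pairs, not necessarily valid), I would first replace it with a valid observation: since every pair in $U$ can be extended into a valid behavior, but more simply, since permanent violation is monotone and I only need \emph{some} extension, I can take $U' = \setof{(x,y) \in U}{f(x)=y}$ or even just discard $U$ entirely and work with the traces I construct — the key point, as in the proof of \Lem{bb-nonmon}, is that $T_v \viol \dmHypSub{i}$ only requires that every $T \in \Sys[f]$ extending $T_v$ fails $\dmHypSub{i}$, and this depends only on the two ``bad'' traces. Concretely, pick any $z_0 \in I$, set $a = z_0[i \mapsto x]$ and $b = z_0[i \mapsto x']$, and let $v = (a, f(a))$ and $v' = (b, f(b))$. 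These are valid I/O pairs, so $v, v' \in \ObsEv[f]$. I would then set $V = U' \cup \set{v, v'}$, which is a finite extension of $U$ in $\Obs[f]$ (and, after discarding invalid traces, a subset of $\ObsEv[f]$, hence a member of $\Sys[f]$).

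\medskip
\noindent
Next I would argue $V \viol[\Sys[f]] \dmHypSub{i}$, which by the conjunction structure of $\dmHyp$ suffices for $V \viol[\Sys[f]] \dmHyp$. Take any $T \in \Sys[f]$ with $V \prefix T$. Instantiate the universally quantified $\uvar, \vvar$ of $\dmHypSub{i}$ with $v, v'$: they are in $T$, and $\proj_i(\IN{v}) = x \neq x' = \proj_i(\IN{v'})$, so the guard $\neg\operatorname{same}_i(\uvar,\vvar)$ is satisfied. Now any witnesses $\pvar, \qvar \in T$ must satisfy $\operatorname{same}_i(\uvar,\pvar)$, $\operatorname{same}_i(\vvar,\qvar)$ and $\operatorname{almost}_i(\pvar,\qvar)$; writing $\IN{\pvar} = w[i \mapsto x]$ and $\IN{\qvar} = w[i \mapsto x']$ for a common tuple $w$ of the remaining inputs (this is exactly what $\operatorname{same}_i$ and $\operatorname{almost}_i$ force), validity of $T \subseteq \ObsEv[f]$ forces $\OUT{\pvar} = f(w[i \mapsto x])$ and $\OUT{\qvar} = f(w[i \mapsto x'])$. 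But by the non-DDM witness, $f(w[i\mapsto x]) = f(w[i\mapsto x'])$, so $\OUT{\pvar} = \OUT{\qvar}$, i.e.\ $\operatorname{output}(\pvar,\qvar)$ holds, contradicting the required $\neg\operatorname{output}(\pvar,\qvar)$. Hence no such $\pvar, \qvar$ exist and $T \not\models \dmHypSub{i}$.

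\medskip
\noindent
The main obstacle I anticipate is purely bookkeeping: making sure the handling of the original observation $U$ is clean, since $U$ may contain invalid traces (as \Ex{addition} emphasizes), and $T_v$ must end up a subset of $\ObsEv[f]$ to lie in $\Sys[f]$. The cleanest route is to note that if $U$ itself is not already valid then it has \emph{no} valid extension at all, so $U$ vacuously permanently violates any property in $\Sys[f]$; otherwise $U \subseteq \ObsEv[f]$ and the construction above applies directly. A secondary subtlety is confirming that $\operatorname{almost}_i$ together with $\operatorname{same}_i$ really does pin down $\IN{\pvar}$ and $\IN{\qvar}$ to agree on \emph{all} coordinates except possibly differing only through the shared $w$ — this follows immediately from the definitions but is worth stating explicitly so the contradiction step is airtight.
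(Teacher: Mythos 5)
Your proof is correct and follows essentially the same route as the paper's: extract the non-minimality witness $(i, x, x')$ from the negation of \Def{datamin}, add the two valid traces $v = (z_0[i\mapsto x], f(\cdot))$ and $v' = (z_0[i\mapsto x'], f(\cdot))$ to the observation, and argue that in any valid extension the constraints $\operatorname{same}_i$ and $\operatorname{almost}_i$ pin the candidate witnesses' outputs down to $f(w[i\mapsto x]) = f(w[i\mapsto x'])$, so $\dmHypSub{i}$ fails. The paper additionally splits off the finite-$I$ case (where one can just take $V = \ObsEv[f]$), but your uniform construction covers it, and your extra care about invalid traces in $U$ is a harmless refinement of the paper's standing assumption that $U \in \Sys[f]$.
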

\begin{proof}
  Assume a finite set of traces $U \in \Sys[f]$.
  We need to show that there is a finite extension $V \extends U$
  permitted by $\Sys[f]$ that permanently violates $\dmHyp$.
  First, note that the task is trivial if $I$ is finite:
  we simply pick $V = \ObsEv[f]$, \ie the set of all possible
  executions, which is also finite.
  The only finite extension of $V$ permitted by $\Sys[f]$ is the
  complete set of traces $\ObsEv[f]$ itself, and since $f$ is not
  distributed minimal, $\dmHyp$ cannot hold for $\ObsEv[f]$.

  Assume instead that $I$ is infinite.
  Since $f$ is distributed non-minimal, there must be some input
  position $i$ and some pair of distinct inputs $x \neq x' \in I_i$,
  such that $f(z[i \mapsto x]) = f(z[i \mapsto x'])$ for any choice of
  $z \in I$.
  Let $y = z[i \mapsto x]$ and $y' = z[i \mapsto x']$ for an arbitrary
  $z \in I$.
  Then any set $W \in \Sys[f]$ that contains the traces
  $u = (y, f(y))$ and $u' = (y', f(y'))$ violates $\dmHyp$.
  To see this, assume instead that $W \sat[{\Sys[f]}] \dmHyp$.
  Then there must be traces $v, v' \in W$ that agree on all but the
  $i$-th input, such that
  $f(\IN{v}[i \mapsto x]) \neq f(\IN{v'}[i \mapsto x'])$, thus
  contradicting non-minimality of $f$.
  Hence, by picking $V = U \cup \set{u, u'}$, we have
  $V \viol[f] \dmHyp$.
  \qed
\end{proof}

\begin{lemma}[Semantic satisfaction]\label{lem:weak-sat}
  If $f\colon I \to O$ is DDM, then $\dmHyp$ is semantic monitorable for
  satisfaction \textup{(}in $\Sys[f]$\textup{)} if and only if $I$ is
  finite.
\end{lemma}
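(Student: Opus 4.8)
The plan is to prove the two implications separately, the common ingredient being that, since $f$ is DDM, the set $\ObsEv[f]$ of all valid I/O pairs of $f$ is itself a model of $\dmHyp$. To see this, fix an input position $i$ and $\uvar, \vvar \in \ObsEv[f]$: if $\proj_i(\uin) = \proj_i(\vin)$ then the antecedent $\neg\operatorname{same}_i(\uvar,\vvar)$ of $\dmHypSub{i}$ fails and there is nothing to check; otherwise $\proj_i(\uin) = x \neq x' = \proj_i(\vin)$, and \Def{datamin} provides a $z \in I$ with $f(z[i\mapsto x]) \neq f(z[i\mapsto x'])$, so taking $\pvar = (z[i\mapsto x], f(z[i\mapsto x]))$ and $\qvar = (z[i\mapsto x'], f(z[i\mapsto x']))$ verifies $\operatorname{same}_i(\uvar,\pvar) \And \operatorname{same}_i(\vvar,\qvar) \And \operatorname{almost}_i(\pvar,\qvar) \And \neg\operatorname{output}(\pvar,\qvar)$. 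Hence $\ObsEv[f] \models \dmHypSub{i}$ for every $i$, so $\ObsEv[f] \models \dmHyp$. For the ``$\Leftarrow$'' direction, if $I$ is finite then $\ObsEv[f]$ is a finite observation and, because all traces here have length one, it is the maximum element of $\Sys[f]$: any $B \in \Sys[f]$ with $\ObsEv[f] \prefix B$ satisfies both $B \subseteq \ObsEv[f]$ and $\ObsEv[f] \subseteq B$, hence $B = \ObsEv[f]$. Thus for any observation $U \in \Sys[f]$ the extension $\ObsEv[f] \extends U$ lies in $\Sys[f]$ and $\ObsEv[f] \sat[{\Sys[f]}] \dmHyp$ follows immediately from the fact above, so $\dmHyp$ is semantically gray-box monitorable for satisfaction in $\Sys[f]$.

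For the ``$\Rightarrow$'' direction I would argue the contrapositive: if $I$ is infinite then no finite observation permanently satisfies $\dmHyp$ in $\Sys[f]$. Fix a finite $V \in \Sys[f]$. Since $I = I_1 \times \dotsm \times I_n$ is infinite, some coordinate $I_i$ is infinite; since $n \geq 2$ and $\abs{I_j} \geq 2$ there is a coordinate $j \neq i$ and tuples $z, z' \in I$ agreeing everywhere except at $j$, where $\proj_j(z) \neq \proj_j(z')$. As $V$ is finite, pick $x \neq x' \in I_i$ that do not occur as the $i$-th input of any trace of $V$, and set $u = (z[i\mapsto x], f(z[i\mapsto x]))$, $u' = (z'[i\mapsto x'], f(z'[i\mapsto x']))$, and $W = V \cup \set{u, u'} \in \Sys[f]$, so $W$ extends $V$. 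Then the pair $(u, u')$ witnesses $W \not\models \dmHypSub{i}$: it satisfies $\neg\operatorname{same}_i(u,u')$; since $x$ (\resp $x'$) occurs as an $i$-th input in $W$ only in $u$ (\resp only in $u'$), the sole candidate instantiation of $\pvar, \qvar$ with $\operatorname{same}_i(u,\pvar) \And \operatorname{same}_i(u',\qvar)$ is $\pvar = u$, $\qvar = u'$, yet $\operatorname{almost}_i(u, u')$ fails because $u$ and $u'$ disagree at coordinate $j \neq i$. Hence $W \not\models \dmHyp$, so $V$ does not permanently satisfy $\dmHyp$; as $V$ was arbitrary, $\dmHyp$ is not semantically monitorable for satisfaction.

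The ``$\Rightarrow$'' direction is the crux. The naive move of padding $V$ with two traces differing only at position $i$ (mirroring the proof of \Lem{bb-nonmon}) does not work here, because when $f$ is DDM those two traces may be forced to have distinct outputs, which in fact keeps $\dmHypSub{i}$ satisfied for that pair; the essential idea is therefore to use $n \geq 2$ (together with nontriviality of $I_j$) to make the two added traces also disagree at a second coordinate, so that the inner witness requirement $\operatorname{almost}_i$ can never be met, regardless of how $f$ behaves.
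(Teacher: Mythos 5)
Your proof is correct and follows essentially the same route as the paper's: the finite case is handled by extending to the (finite, maximal) observation $\ObsEv[f]$, and the infinite case by adjoining two valid traces with fresh $i$-th inputs that additionally disagree at a second coordinate $j \neq i$, so that the forced witnesses for $\pvar,\qvar$ fail $\operatorname{almost}_i$. The only differences are presentational (contrapositive rather than contradiction, and you make explicit the fact that $\ObsEv[f] \models \dmHyp$ for DDM $f$, which the paper leaves implicit).
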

\begin{proof}
  First, if $I$ is finite the result follows by picking
  $V = \ObsEv[f]$.
  Assume now that $f$ is distributed minimal, $\dmHyp$ is semantically
  monitorable for satisfaction, and $I$ is infinite.
  Let $U \in \Sys[f]$ be some non-empty, finite set of traces with
  some distinguished element $u \in U$.
  Since $\dmHyp$ is monitorable for satisfaction, there must be a
  finite extension $V \extends U$ that permanently satisfies $\dmHyp$.
  To arrive at a contradiction, it suffices to construct a finite
  extension $W \extends V$ that does not satisfy $\dmHyp$.

  Pick an input position $i$ for which $I_i$ is infinite.
  Such an~$i$ must exist because otherwise $I$ would be the Cartesian
  product of finite sets, and $I$ is infinite by assumption.
  Next, pick a pair of distinct element $x \neq x' \in I_i$ such that
  there are no traces in $V$ with $x$ or $x'$ as their $i$-th input.
  Such $x, x'$ must also exist because $I_i$ is infinite but $V$ is
  finite.
  Finally, pick an input position $j \neq i$, and a $y \in I_j$ such
  that $y \neq \proj_j(\IN{u})$.
  Such a $y$ must exist for $I_j$ to be non-trivial.

  Now let $z = \IN{u}[i \mapsto x]$,
  $z' = \IN{u}[i \mapsto z', j \mapsto y]$ and $w = (z, f(z))$,
  $w' = (z', f(z'))$.
  Then $w$ and $w'$ are clearly valid traces, \ie
  $w, w' \in \ObsEv[f]$, but $w, w' \notin V$ since $w$ and $w'$ have
  $x$ and $x'$ as their $i$-th inputs, respectively.
  Let $W = V \cup \set{w, w'}$.
  By construction, $\neg\operatorname{same}_i(\uvar, \vvar)$ holds if
  we instantiate $\uvar$ and $\vvar$ to $w$ and $w'$, respectively,
  but there is no pair of traces $v, v' \in W$ to instantiate
  $\pvar, \qvar$ in such a way that
  $\operatorname{same}_i(\uvar, \pvar)$,
  $\operatorname{same}_i(\vvar, \qvar)$ and
  $\operatorname{almost}_i(\pvar, \qvar)$ all hold simultaneously.
  The former force the choice $\pvar \mapsto w$ and $\qvar \mapsto w'$
  but, by construction, $\proj_j(\IN{w}) \neq \proj_j(\IN{w'})$.
  Hence $W \nsat \dmHyp$ and we arrive at a contradiction.
  \qed
\end{proof}

Intuitively, Theorem~\ref{thm:weak-mon} means that $f$ cannot be
monitored for satisfaction.
Note that the semantic monitorability property established by
\Thm{weak-mon} is independent of whether we can actually decide DDM
for the given $f$.
We address the question of strong monitorability later on in this
section.

If $I$ is finite, it is easy to strengthen \Thm{weak-mon} by providing
a perfect monitor $M_{\textsf{dm}}$ for $\dmHyp$.
Since $f$ is assumed to be a total function with a finite domain, we
can simply check the validity of $\dmHyp$ for every trace
$U \subseteq \ObsEv[f]$ and tabulate the result.
To do so, the $\exists$ and $\forall$ quantifiers in $\dmHyp$ can be
converted into conjunctions and disjunctions over $U$.
\begin{corollary}\label{cor:nonmon}
  For $f:I\Into O$ with finite $I$, $\dmHyp$ is strongly monitorable in
  $\Sys[f]$.
\end{corollary}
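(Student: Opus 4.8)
The plan is to reuse the perfect monitor $M_{\textsf{dm}}$ sketched in the paragraph preceding the corollary and to verify that, when $I$ is finite, it is a genuine (\ie computable) monitor and a witness for strong monitorability in the sense of \Def{strong-monitorability}. First I would record the finiteness facts. Since $f\colon I \to O$ is total and $I$ is finite, the set of valid executions $\ObsEv[f] = \setof{(x, f(x))}{x \in I}$ is finite, with $\abs{\ObsEv[f]} = \abs{I}$, so $\Sys[f] = \powerset(\ObsEv[f])$ is a finite collection of finite sets. Hence, for any observation $O \in \Obs[f]$ there are only finitely many $B \in \Sys[f]$ with $O \prefix B$, and for each such finite $B$ the validity of $\dmHyp$ is decidable: every quantifier of each conjunct $\dmHypSub{i}$ ranges over the finite set $B$ and can be unfolded into a finite conjunction or disjunction over its traces. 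Therefore $O \sat[{\Sys[f]}] \dmHyp$ and $O \viol[{\Sys[f]}] \dmHyp$ are decidable predicates of $O$, and the function $M_{\textsf{dm}} \colon \Obs[f] \to \set{\top,\bot,?}$ that outputs $\top$, $\bot$, or $?$ according to which (if any) of them holds is computable; by \Def{perfect-monitor} it is perfect, and every perfect monitor is sound.

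It then remains to produce, for each observation $O$, an extension $P \extends O$ with $P \in \Sys[f]$ on which $M_{\textsf{dm}}$ returns a definite verdict, and here I would take $P = \ObsEv[f]$. Since traces have length $1$, $O \prefix P$ reduces to $O \subseteq P$, which holds whenever $O$ is consistent with the system, and $\ObsEv[f]$ is the greatest element of $\Sys[f]$, so it is the only $B \in \Sys[f]$ with $P \prefix B$. Consequently $P \sat[{\Sys[f]}] \dmHyp$ iff $\ObsEv[f] \models \dmHyp$ and $P \viol[{\Sys[f]}] \dmHyp$ iff $\ObsEv[f] \not\models \dmHyp$; by the correspondence between the data-based characterization of \Def{datamin} and the formula $\dmHyp$, $\ObsEv[f] \models \dmHyp$ holds exactly when $f$ is DDM, so precisely one alternative occurs and $M_{\textsf{dm}}(P) \in \set{\top,\bot}$ in either case. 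By \Def{strong-monitorability} this proves that $\dmHyp$ is strongly monitorable in $\Sys[f]$. An alternative route is to invoke the finite-domain case of \Thm{weak-mon}, which already yields \emph{semantic} gray-box monitorability when $I$ is finite, and to observe that a computable perfect monitor turns any extension $P$ with $P \sat[{\Sys[f]}] \dmHyp$ or $P \viol[{\Sys[f]}] \dmHyp$ into one with $M_{\textsf{dm}}(P) \in \set{\top,\bot}$.

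The step I expect to need the most care is the computability argument, namely making precise that deciding $O \sat[{\Sys[f]}] \dmHyp$ is a terminating computation — enumerate the finitely many $B \in \Sys[f]$ extending $O$ and, for each, evaluate the bounded formula $\dmHyp$ over the finite set $B$ — together with the bookkeeping for degenerate observations $O$ that contain an I/O pair outside $\ObsEv[f]$ and hence are inconsistent with the system. For the latter I would either restrict attention to observations in $\Sys[f]$, as is implicitly done in the proofs of \Lem{weak-viol} and \Lem{weak-sat}, or let $M_{\textsf{dm}}$ output $?$ on them (which keeps it sound, since no $B \in \Sys[f]$ extends such an $O$); in every case where a witnessing extension can exist, \ie for $O \in \Sys[f]$, the choice $P = \ObsEv[f]$ works.
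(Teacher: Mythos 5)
Your proposal is correct and follows essentially the same route as the paper, which justifies the corollary by observing that for finite $I$ the set $\ObsEv[f]$ is finite, so a perfect (hence sound) monitor can be obtained by tabulating $\dmHyp$ over every $U \subseteq \ObsEv[f]$ with the quantifiers unfolded into finite conjunctions and disjunctions. Your additional care about exhibiting the definite-verdict extension $P = \ObsEv[f]$ and about inconsistent observations only makes explicit what the paper leaves implicit.
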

If $I$ is infinite, then $\dmHyp$ is not semantically monitorable for
satisfaction, but we can still hope to build a sound monitor for
violation of $\dmHyp$.

\subsection{Building a Gray-box Monitor for DDM}

In what follows, we assume a computable function capable of deciding
DDM only for some instances.
This function, which we call oracle, will serve as the basis for a
\emph{sound} monitor for DDM \Pty{smon}.
This monitor will detect some, but not all, violations of DDM when
given sets of observed traces, thus resolving the apparent tension
between~\Pty{undec} and \Pty{smon}.

Given $f\colon I_1 \times \dotsm \times I_n \to O$, we define the
predicate $\phi_f$ as
\[
  \phi_f(i, x, y) = \exists z \in I.\, f(z[i \mapsto x]) \neq f(z[i
  \mapsto y]),
\]
and assume a total computable function
$N_{f, i}\colon I_i \times I_i \to \set{\top, \bot, ?}$ such that
\[
N_{f,i}(x,y) = \begin{cases}
  \top \text{ or } ? & \text{if $\phi_f(i, x, y)$ holds},\\
  \bot \text{ or } ? & \text{otherwise.}
\end{cases}
\]
The function $N_{f,i}$ acts as our oracle to instantiate the
existential quantifiers in $\dmHyp$.
As discussed earlier, such oracles may be implemented by statically
analyzing the system under observation (here, the function $f$).
In our proof-of-concept implementation, we extract $\phi_f(i, x, y)$
from $f$ using {\em symbolic execution}, and use an SMT solver to
compute $N_{f, i}(x, y)$.

We now define a monitor $M_{\textsf{dm}}$ for $\dmHyp$ as follows:
\[
  M_{\textsf{dm}}(U) = \begin{cases}
    ?{} & \text{if } f(\IN{u}) \neq \OUT{u} \text{ for some } u \in U,\\
    ?{} & \text{if } \bigand_{i=1}^n \bigand_{u,u' \in U}
    N_{f,i}(\proj_i(\IN{u}), \proj_i(\IN{u'})) \neq \bot,\\
    \bot & \text{otherwise.}
  \end{cases}
\]
Intuitively, the monitor $M_{\textsf{dm}}(U)$ checks the set of traces $U$
for violations of DDM by verifying two conditions:
the first condition ensures the \emph{consistency} of $U$, \ie that
every trace in $U$ does in fact correspond to a valid execution of
$f$;
the second condition is necessary for $U$ \emph{not} to permanently
violate $\dmHyp$.
Hence, if it fails, $U$ must permanently violate $\dmHyp$.
Since $N_{f,i}$ is computable, so is $M_{\textsf{dm}}$.
Note that $M_{\textsf{dm}}$ never gives a positive verdict $\top$.
This is a consequence of \Thm{weak-mon}:
if $f$ is DDM, then $\dmHyp$ is not monitorable in $\Sys[f]$.
In other words, DDM is not monitorable for satisfaction.

The second condition in the definition of $M_{\textsf{dm}}$ is an
approximation of $\dmHyp$:
the universal quantifiers are replaced by conjunctions over the finite
set of input traces $U$, while the existential quantifiers are
replaced by a single quantifier ranging over all of $\ObsEv[f]$ (not
just $U$).
This approximation is justified formally by the following theorem.

\newcounter{thm-soundness}
\setcounter{thm-soundness}{\value{theorem}}

\begin{theorem}[soundness]\label{thm:soundness}
  The monitor $M_{\textsf{dm}}$ is sound. Formally,
  \begin{enumerate}
  \item $U \sat[{\Sys[f]}] \dmHyp$ if $M_{\textsf{dm}}(U) = \top$, and
  \item $U \viol[{\Sys[f]}] \dmHyp$ if $M_{\textsf{dm}}(U) = \bot$.
  \end{enumerate}
\end{theorem}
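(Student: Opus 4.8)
The plan is to treat the two soundness clauses separately; the first is immediate, the second carries all the work.

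For clause~(1): the case split defining $M_{\textsf{dm}}$ only ever returns $?$ or $\bot$, never $\top$, so ``$U \sat[{\Sys[f]}] \dmHyp$ whenever $M_{\textsf{dm}}(U) = \top$'' holds vacuously. (This is also forced by \Thm{weak-mon}: since $\dmHyp$ is not monitorable for satisfaction when $f$ is DDM, no sound monitor may ever commit to $\top$.) For clause~(2), assume $M_{\textsf{dm}}(U) = \bot$, \ie we are in the ``otherwise'' branch, so (i)~$f(\IN u) = \OUT u$ for every $u \in U$ --- that is, $U \subseteq \ObsEv[f]$ --- and (ii)~there are an index $i$ and traces $u, u' \in U$ with $N_{f,i}(\proj_i(\IN u), \proj_i(\IN{u'})) = \bot$. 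Write $x = \proj_i(\IN u)$ and $y = \proj_i(\IN{u'})$. I would first dispose of the degenerate case $x = y$: since $\phi_f(i,x,x)$ is false, the oracle may legitimately answer $\bot$ there, yet such a pair says nothing about $\dmHyp$; so one needs the mild hypothesis that $N_{f,i}$ answers $?$ on equal arguments, equivalently that the inner conjunction in $M_{\textsf{dm}}$ ranges only over pairs with $\proj_i(\IN u) \neq \proj_i(\IN{u'})$. With $x \neq y$ established, (ii) and the oracle's specification tell us $\phi_f(i,x,y)$ fails, \ie
\[
  f(z[i \mapsto x]) \;=\; f(z[i \mapsto y]) \qquad\text{for every } z \in I.
\]

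Next, let $B \in \Sys[f]$ be an arbitrary behaviour with $U \prefix B$. Since all traces here are I/O pairs of length one, $U \prefix B$ is simply $U \subseteq B$, and $B \in \Sys[f] = \powerset(\ObsEv[f])$ means every trace of $B$ is a genuine execution of $f$. I claim $B \not\models \varphi_i$, hence $B \not\models \dmHyp$. Instantiating $\uvar \mapsto u$ and $\vvar \mapsto u'$ (both lie in $B$), the antecedent $\neg\operatorname{same}_i(\uvar,\vvar)$ holds because $x \neq y$, so $B \models \varphi_i$ would require witnesses $\pvar \mapsto v$, $\qvar \mapsto v'$ in $B$ with $\operatorname{same}_i(u,v)$, $\operatorname{same}_i(u',v')$, $\operatorname{almost}_i(v,v')$ and $\neg\operatorname{output}(v,v')$. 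Setting $z = \IN v$, we get $\proj_i(z) = x$, and $\operatorname{almost}_i(v,v')$ together with $\operatorname{same}_i(u',v')$ forces $\IN{v'} = z[i \mapsto y]$. Since $v, v' \in B \subseteq \ObsEv[f]$, we then have $\OUT v = f(z) = f(z[i \mapsto x])$ and $\OUT{v'} = f(z[i \mapsto y])$, which are equal by the displayed identity --- contradicting $\neg\operatorname{output}(v,v')$. Hence no such $v, v'$ exist, $B \not\models \varphi_i$, and therefore $B \not\models \dmHyp$. As $B$ was an arbitrary extension of $U$ within $\Sys[f]$, this gives $U \viol[{\Sys[f]}] \dmHyp$, as required.

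The conceptual crux --- and the reason this works in a gray-box but not a black-box setting --- is the single step invoking $B \subseteq \ObsEv[f]$ to tie the outputs of the hypothetical witnesses $v, v'$ to actual values of $f$; without it one could always adjoin spurious traces with mismatched outputs to satisfy $\varphi_i$, which is exactly the obstruction behind \Lem{bb-nonmon}. The only genuine nuisance in the argument is the bookkeeping around the $x = y$ edge case flagged above (one must constrain either the oracle or the conjunction); everything else is a direct unwinding of the definitions of $\varphi_i$, $\Sys[f]$, $\ObsEv[f]$, and $N_{f,i}$.
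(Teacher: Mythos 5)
Your proof is correct and follows essentially the same route as the paper's: clause~(1) is vacuous because the monitor never outputs $\top$, and clause~(2) unfolds the $\bot$ branch, uses the oracle specification to obtain $f(z[i \mapsto x]) = f(z[i \mapsto y])$ for all $z$, and then shows that no valid extension $B \subseteq \ObsEv[f]$ can supply witnesses for $\dmHypSub{i}$. Your handling of the $x = y$ edge case is a genuine refinement: the paper's chain of equivalences silently drops the $\neg\operatorname{same}_i$ antecedent in its final step, so without your added restriction (the oracle answering $?$ on equal arguments, or the conjunction ranging only over pairs with distinct $i$-th inputs) the monitor could in principle report $\bot$ on a pair that does not actually force a violation.
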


\begin{proof}
  The monitor never gives a $\top$ verdict, so the first half of the
  theorem (satisfaction) holds vacuously.
  For the second part (violation), we have
  \begin{align*}
    M_{\textsf{dm}}(U) = \bot
    &\textstyle \quad \iff \quad U \in \Sys[f] \, \land \,
      \bigor_{i=1}^n \bigor_{u,u' \in U} N_{f,i}(\proj_i(\IN{u}), \proj_i(\IN{u'})) = \bot,
  \end{align*}
  and
  \begin{align*}
    &\textstyle
      \bigor_{i=1}^n \bigor_{u, u' \in U}
      N_{f,i}(\proj_i(\IN{u}), \proj_i(\IN{u'})) = \bot\\
    \iff\quad
    &\textstyle
      \bigor_i \exists u, u' \in U.\,
      \neg\phi_f(i, \proj_i(\IN{u}), \proj_i(\IN{u'}))\\
    \iff\quad
    &\textstyle
      \bigor_i \exists u, u' \in U.\, \forall z \in I.\,
      f(z[i \mapsto \proj_i(\IN{u})]) = f(z[i \mapsto \proj_i(\IN{u'})])\\
    \iff\quad
    &\textstyle
      \bigor_i \exists u, u' \in U.\, \forall w \in \ObsEv[f].\,
      f(\IN{w}[i \mapsto \proj_i(\IN{u})]) = f(\IN{w}[i \mapsto \proj_i(\IN{u'})])\\
    \implies\quad
    & \forall V \in \Sys[f]. \, U \prefix V \implies\\
    &\textstyle \quad
      \bigor_i \exists u, u' \in V.\, \forall w \in V.\,
      f(\IN{w}[i \mapsto \proj_i(u)]) = f(\IN{w}[i \mapsto \proj_i(u')])\\
    \iff \quad &U \viol[{\Sys[f]}] \dmHyp.
  \end{align*}
  \qed
\end{proof}

We describe a prototype implementation of $M_{\textsf{dm}}$
in~\Sec{minion}.

\enableLstShortInline

\section{Implementation and Prototype}
\label{app:minion}
\label{sec:minion}

We have implemented the ideas described in \Sec{monitoring} in a
proof-of-concept monitor for DDM called \Minion{}.
The monitor uses the symbolic execution API and the SMT backend of the
KeY deductive verification system~\cite{ahrendt16deductive,KeYproject}
to extract logical characterizations of Java programs (their {\em
  symbolic execution trees}).
It then extends them to first-order formulas over sets of observed
traces, and checks the result using the state-of-the-art SMT solver
Z3~\cite{DeMouraB08tacas,Z3github}.
The \Minion{} monitor is written in Scala and provides a simple
command-line interface (CLI).
Its source code is freely available online at
\url{https://github.com/sstucki/minion/}.

\begin{figure}%
  \begin{lstlisting}
 class Toll {
   int rate(int hour, int passengers) {
     int r;                                     // standard rates:
     if (hour >= 9 && hour <= 17) { r = 90; }   //  - daytime
     else                         { r = 70; }   //  - nighttime
     if (passengers > 2) { r = r - (r / 5); }   // carpool: 20%
     return r;
   }
   int max(int x, int y) {
     if (x > y) { return x; } else { return y; }
   }
   int fee(int t1, int t2, int t3, int p) {
     int r1 = rate(t1, p);           // rates at each toll station
     int r2 = rate(t2, p);
     int r3 = rate(t3, p);
     int f1 = max(r1, r2) * 4;       // fees per road section
     int f2 = max(r2, r3) * 7;
     return f1 + f2;                 // total fee
   }
 }
  \end{lstlisting}\vspace{-1em}
  \caption[Toll fees]{A program for computing the total fee of a trip
    on a toll road.}
  \label{fig:toll}
\end{figure}

Before we describe \Minion{} in more detail, we introduce a running
example illustrating the principles of both monolithic and distributed
data minimality.
For an example of monolithic data minimization, first consider the
method @rate@ shown in \Fig{toll}.
The purpose of this method is to compute the baseline rate to be paid
by the driver of a vehicle on a toll road.
The rate depends on the time of day and the number of passengers in
the vehicle.
The range of the output is $\{56, 70, 72, 90 \}$, and consequently the
data processor does not need to know the precise hour of the day, nor
the exact number of passengers.
A vehicle might pass a toll station at any time between 9pm and 5am
to be subject to a the higher daytime rates (72, 90), and at any other
time to benefit from the lower nighttime rates (56, 70).
Also, any vehicle occupied by three or more passengers is eligible for
a 20\% carpool discount.
Giving the actual hour and number of passengers violates the principle
of data minimality because more information than necessary is
collected.
Data minimization is the process of ensuring that the range of inputs
provided is reduced, such that different inputs result in different
outputs.

In a distributed setting, the concept of minimization is more complex
as input data may be collected from multiple independent sources.
Consider the method @fee@ in \Fig{toll}.
This method computes the total fee for a trip on a toll road, based on
the hours at which a vehicle passes three consecutive toll stations,
and on the number of passengers in the vehicle.
The overall fee depends on the total time spent on the toll road,
which is data collected from all three toll stations.
In particular, if a vehicle enters a section of the toll road during a
low-rate early morning hour, but fails to reach the next station
before 9pm, the driver will be charged the more expensive daytime rate
for the entire section.
\DDM requires minimizing each input parameter (\ie the information
collected at separate toll stations) individually.
A \emph{preprocessor} or \emph{data minimizer}~\cite{ASS17dm} located
at any given toll station can easily minimize the individual inputs
(@hour@, @passengers@) at that station.
But an individual minimizer cannot guarantee minimization with
respect to the overall fee since it has no information about the input
data collected at the other stations.
\DDM therefore constitutes merely a ``best effort'' to minimize inputs
given the inherently distributed nature of the system.

When running \Minion{} on the @fee@ method of the class~@Toll@, the
tool builds first the symbolic execution tree.
Then, the monitor reads and parses traces from an input file or
standard input.
Whenever \Minion{} parses a new trace, it rechecks the entire set of
traces read thus far for violation, thereby supporting both online and
offline monitoring.
Traces are read from CSV files, where the number and format of the
inputs is determined automatically from the method signature.
\Fig{raw-traces} shows example traces for the @fee@ method.
Columns~1--4 correspond to the parameters @h1@, @h2@, @h3@ and @p@,
respectively, while column~5 contains the result computed by @fee@ for
the given values.

By default, \Minion{} monitors traces for DDM.
Thus, when processing the traces given in \Fig{raw-traces}, it signals
a violation after reading the second line because @fee@$(20,h_2,h_3,p)
={}$@fee@$(2, h_2, h_3, p)$ irrespective of the choice of $h_2$, $h_3$,
and $p$.
In contrast, all traces listed in \Fig{dist-traces} are accepted by
\Minion{} since they have been preprocessed by a distributed
minimizer.
Alternatively, \Minion{} can be instructed to monitor traces for
monolithic data minimality (MDM) in which case a violation is signaled
when processing the last line of \Fig{dist-traces}, whereas all traces
in \Fig{mono-traces} are accepted.

\begin{figure}[tbp]
  \centering
  \begin{sfloat}{unprocessed}{fig:raw-traces}{.34\textwidth}
    \centering
\begin{verbatim}
    20, 22, 1,  1, 770
    2 , 2,  3,  5, 616
    9 , 10, 10, 4, 792
    23, 0,  2,  5, 616
    10, 11, 14, 1, 990
    8,  10, 11, 1, 990
    ...
\end{verbatim}%
  \end{sfloat}\quad%
  \begin{sfloat}{distributed~minimal}{fig:dist-traces}{.30\textwidth}
    \centering
\begin{verbatim}
    0, 0, 0, 1, 770
    0, 0, 0, 3, 616
    9, 9, 9, 3, 792
    0, 0, 0, 3, 616
    9, 9, 9, 1, 990
    0, 9, 9, 1, 990
    ...
\end{verbatim}%
  \end{sfloat}\quad%
  \begin{sfloat}{monolithic~minimal}{fig:mono-traces}{.30\textwidth}
    \centering
\begin{verbatim}
    20, 22, 1,  1, 770
    2,  2,  3,  5, 616
    9,  10, 10, 4, 792
    2,  2,  3,  5, 616
    14, 15, 15, 2, 990
    14, 15, 15, 2, 990
    ...
\end{verbatim}%
  \end{sfloat}
  \caption[Traces]{Raw and minimized traces generated from \texttt{Toll.java}.}
  \label{fig:traces}
\end{figure}

\subsection{Lazy \vs Eager Monitoring}\label{sec:eager}
Perhaps surprisingly, there are cases where \Minion{} will detect a
violation of DDM
whereas it will not detect a
violation of MDM.
Consider the function
$f(x, y) = x$.
Since $f$ simply ignores its second argument, it is clearly neither
distributed nor monolithic minimal.
When monitoring the pair of traces $(1,2,1)$ and $(3,4,3)$ for DDM,
\Minion{} detects a violation because $f(x, 2) = f(x, 4)$ for any
choice of $x$.
Note, however, that this situation does not appear among the observed
traces since the two values for $y$ in the respective traces differ.
The tool reports a violation because a common value for $x$ is found
by our oracle when monitoring for DDM.
When monitoring for MDM \Minion{} does not detect the
violation, because in this case there is no need to invoke the oracle.

Whether or not this is the intended behavior of the monitor depends on
the assumption of whether the traces are collected from a program $f$
or from the \emph{combined} program $f \comp p$ ($p$ being a minimizer).
In the latter case, some combinations of inputs may never be observed
as the inputs have been minimized.
On the other hand, if traces are not considered preprocessed, we may
wish to explore the behavior of $f$ more exhaustively.
For this purpose, \Minion{} can be instructed to monitor a set of
traces \emph{eagerly} for MDM,
\resp \emph{lazily} for DDM.
For the former, \Minion{} considers not
just the observed traces, but any combination of observed input
values---even if that combination does not actually correspond to an
observed trace.
For the latter, \Minion{} only considers combinations of inputs
originating from traces with the same result value.
For example, for the pair of input traces $(1,2,1)$ and $(3,4,3)$,
\Minion{} is able to find a violation in eager MDM mode
since $f(1,2) = f(1,4)$, but not in lazy DDM mode since
$f(1,2) \neq f(3,4)$.

\begin{figure}[tbp]
  \begin{lstlisting}
class Div {
  //@ requires x >= 0 && y > 0;
  //@ ensures (\result * y <= x) && (\result * y < x + y);
  int posDiv(int x, int y) {
    int q = 0;
    //@ maintaining (r >= 0) && (r + q * y == x);
    //@ decreasing r;
    for (int r = x; r >= y; ++q) { r -= y; }
    return q;
  }
}\end{lstlisting}\vspace{-1em}
  \caption[Division]{A naive division algorithm for positive integers.}
  \label{fig:posDiv}
\end{figure}

\subsection{Loops and loop Invariants}
Thus far, we have only considered simple programs
whose control flow does not include loops or recursive calls.
Monitoring programs with loops for data minimality is more
challenging, as illustrated by method @posDiv@ given in \Fig{posDiv},
which implements integer division by repeatedly subtracting @y@ from
@x@ and counting how many times this is possible.
Our simple symbolic tree method cannot extract a complete logical
characterization of @posDiv@ automatically.
Instead, we propose two options:
\begin{enumerate}%
\item to obtain an approximate characterization by unrolling the loop to
  a fixed depth;
\item to annotate loop invariants, which allows KeY's symbolic
  execution engine to merge the different execution paths and give a
  complete characterization of the method.
\end{enumerate}
Both of these options are implemented in \Minion{}.
In the first case, choosing large values for unrolling leads to high
symbolic execution times but lower values may affect accuracy.
If the number of loop iterations $n$ at runtime remains below the
number of unrolls $m$, the resulting logical characterization is exact
but if $n > m$ the characterization becomes an over-approximation and
the the monitor may fail to detect non-minimal traces\footnote{Indeed,
  it may not even detect inconsistent traces, \ie traces where the
  expected output differs from the actual output computed at
  runtime.}.

These false negatives can be avoided by annotating loops with JML loop
invariants, as shown in \Fig{posDiv}.
This invariant specifies that at every loop iteration the remainder
$r$ is positive and, when added to the iteration counter $q$ times the
divisor $y$, equals the original dividend $x$.
With this invariant, the symbolic execution terminates quickly and
without cutting off any branches, and \Minion{} is able to extract a
logical characterization for @posDiv@, which asserts that the eventual
result $q$ of the method must satisfy the equation $qy = x - r$ for
some $r$ such that $0 \leq r < y$.
This is sufficient to correctly monitor any traces generated from
@posDiv@ for violation of DDM.

\disableLstShortInline

\subsection{Performance evaluation}

\addtolength{\tabcolsep}{.5pt}
\begin{table}[tbp]
  \centering
  \footnotesize
  \begin{tabular}{%
    @{}lr@{\quad}rrcc@{\quad}rrcc@{~}rrcc@{}} \toprule
    & \multicolumn{1}{c}{Symb.}
    & \multicolumn{4}{c}{K1: random (s)}
    & \multicolumn{4}{c}{K2: DDMin (s)}
    & \multicolumn{4}{c}{K3: MDMin (s)} \\ %
    & \multicolumn{1}{c}{exec. (s)} & \Eager & \Lazy & E & L
    & \Eager & \Lazy & E & L & \Eager & \Lazy & E & L \\ \midrule
    T1%
    & 30.9\PM{1.5}
    &  0.6\PM{0.2}     &  0.6\PM{0.1} & \FF & \FF
    & 51.2\PM{1.2}     & 30.3\PM{7.2} & \QM & \QM
    &  0.9\PM{\Z{}0.8} &  5.4\PM{0.2} & \FF & \QM
    \\
    T2%
    &  9.2\PM{0.7}
    &  0.4\PM{0.1}     &  0.4\PM{0.0} & \FF & \FF
    & 17.3\PM{0.8}     & 13.9\PM{2.2} & \QM & \QM
    & 17.3\PM{\Z{}0.7} &  3.7\PM{0.3} & \QM & \QM
    \\
    CA%
    &  1.8\PM{0.1}
    &  0.5\PM{0.2}     &  0.4\PM{0.2} & \FF    & \FF
    & 19.2\PM{2.0}     & 14.8\PM{1.7} & \QM    & \QM
    & 13.6\PM{\Z{}5.4} &  3.6\PM{0.1} & \FF\QM & \QM
    \\
    LA%
    &    3.4\PM{0.4}
    &    0.4\PM{0.2} & 0.3\PM{0.0} & \FF & \FF
    &                &            &     &
    & 136.6\PM{37.1} & 3.7\PM{0.3} & \QM & \QM
    \\ \bottomrule
    \\
  \end{tabular}
  \caption[Traces]{Mean running times and verdicts of
    \Minion{} monitoring various methods.
  }
  \label{tbl:benchmarks}
\end{table}

\enableLstShortInline

We evaluated the performance of \Minion{} on a MacBook Pro with a
3.1~GHz Intel~Core~i5 processor and 16~GB of memory, running macOS
10.14.
The results are summarized in \Tab{benchmarks}.
We ran \Minion{} on four Java methods: the @fee@ method from
\Fig{toll} (T1), a variant of that method that computes the fee on a
road with only two toll stations instead of three (T2), as well as the
\texttt{CreditApp} (CA) and \texttt{LoyaltyApp} (LA) benchmarks
introduced in~\cite{corrASS16dm}.
Each method was monitored for DDM violation using three kinds of input
traces:
\begin{enumerate}[(K1)]
\item random input values that respect the input specifications of the
  methods;
\item traces from (K1) minimized using a \emph{distributed data
    minimizer} (DDMin);
\item traces from (K1) minimized using a \emph{monolithic data
    minimizer} (MDMin).
\end{enumerate}
The traces shown in \Fig{traces} are subsets of the inputs generated
for T1.
We generated 10~instances of each kind, accounting for a total of
30~trace sets, each containing exactly 100~traces.

\Tab{benchmarks} shows the mean running time and standard deviation in
seconds, as well as the verdicts produced by \Minion{}.
The second column of the table reports the time spent by the symbolic
execution.
T1 incurs in higher running times because T1 features several
multiply-nested branches.
The remaining columns report the execution times and verdict of the
actual monitor.

The performance of eager and lazy monitoring is similar on random (K1)
inputs because all cases have (finite) small input and output domains.
As expected, the verdicts for the DDMin (K2) traces are inconclusive
since DDM is not monitorable for satisfiability in general (by
\Lem{weak-sat}).
Lazy monitoring does consistently better than eager monitoring on
DDMin (K2) inputs, though the differences are relatively small for T1,
T2 and CA because the ranges of these methods are small
($<10$~elements).
There is a bigger difference for LA, where the range is larger.

The performance of lazy monitoring on MDMin traces (K1) is
consistently better than on DDMin traces (K2) because lazy DDM
monitoring and MDM monitoring coincide for MDMin traces (no SMT
invocations are necessary).
On the other hand, the performance of eager monitoring for MDMin
traces may change drastically depending on whether or not the traces
are also DDMin (which need not be the case).
If they are, then eager monitoring has the same performance for MDMin
traces as for DDMin traces.
If they are not, the eager monitor might detect a violation early in
the input set, cutting the overall execution time.

\disableLstShortInline
\section{Related Work}
\label{sec:related}

\myparagraph{LTL Monitorability.}
Pnueli and Zaks~\cite{pnueli06psl} introduced monitorability as the
existence of extension of the observed traces that permanently satisfy
or violate an LTL property.
It is known that the set of monitorable LTL properties is a superset
of the union of safety and co-safety
properties~\cite{bauer11runtime,bauer07good} and that it is also a
superset of the set of obligation properties~\cite{falcone09runtime,
  falcone12what}.
Havelund and Peled~\cite{havelund18runtime} introduce a finer-grained
taxonomy distinguishing between \emph{always} finitely satisfiable
(resp. refutable), and \emph{sometimes} finitely satisfiable where only
some prefixes are required to be monitorable (for satisfaction).
Their taxonomy also describes the relation between monitorability and
classical safety properties.
This is a new dimension in the monitorability cube in
\Fig{cube} which we will study in the future.
While all the notions mentioned above ignore the system, predictive
monitoring~\cite{zhang12runtime} considers the traces allowed in a
given finite state system.

\myparagraph{Monitoring HyperLTL.}
Monitoring hyperproperties was first studied in~\cite{agrawal16runtime},
which introduces the notion of monitorability for
HyperLTL~\cite{cfkmrs14} and gives an algorithm for a
fragment of alternation-free HyperLTL.
This is later generalized to the full fragment of alternation-free
formulas using formula rewriting in~\cite{bsb17}, which can also
monitor alternating formulas but only with respect to a fixed finite
set of finite traces.
Finally, \cite{finkbeiner17monitoring} proposes an automata-based
algorithm for monitoring HyperLTL, which also produces a monitoring
verdict for alternating formulas, but again for a fixed trace set.
The complexity of monitoring different fragments of HyperLTL was
studied in detail in~\cite{bf18}.
The idea of gray-box monitoring for hyperproperties, as a means for
handling non-monitoriable formulas, was first proposed
in~\cite{bss18}.

\myparagraph{Data minimization.}
A formal definition of {\it data minimization} and the concept of {\it
  data minimizer} as a preprocessor appear in~\cite{ASS17dm}, which
introduces the monolithic and distributed cases.
Minimality is closely related to information flow \cite{cohen1977}.
Malacaria et al.~\cite{malacaria2016} present a symbolic
execution-based verification of non-interference security properties
for the \emph{OpenSSL} library.
In our paper, we have focused on a version of distributed minimization
which is not monitorable in general.
For stronger versions (cf.~\cite{ASS17dm}), Pinisetty et
al.~\cite{PSS18rvh,PASS18corr} show that monitorability for
satisfaction is not possible, but it is for violation.
(the paper also introduces an RV approach for similar safety
hyperproperties for deterministic programs).

\section{Conclusions}
\label{sec:conclusions}

We have rephrased the notion of monitorability considering different
dimensions, namely 
\begin{inparaenum}[(1)]
\item whether the monitoring is black-box or gray-box,
\item whether we consider trace properties or hyperproperties, and
\item taking into account the computatibility aspects of the monitor
  as a program.
\end{inparaenum}
We showed that many hyperproperties that involve quantifier
alternation are non-monitorable in a black-box manner and proposed a
technique that involves inspecting the behavior of the system.
In turn, this forces to consider the computability limitations of the
monitor, which leads to a more general notion of monitorability.

We have considered distributed data minimality (DDM) and expressed
this property in HyperLTL, involving one quantifier alternation.
We then presented a methodology to monitor violations of DDM, based on
a model extracted from the program being monitored in the form of its
symbolic execution tree, and an SMT solver.
We have implemented a tool (\Minion{}) and applied it to a number of
representative examples to assess the feasibility of our approach.

As future work, we plan to extend the proposed methodology for other
hyperproperties, particularly in the concurrent and distributed
setting. We are also planning to use bounded model checking as our verifier at 
run-time by combining over- and under-approximated methods to deal with 
universal and existential quantifiers in HyperLTL formulas. Another interesting 
problem is to apply gray-box monitoring for hyperproperties with real-valued 
signals (e.g., HyperSTL~\cite{nkjdj17}).
Finally, we intend to extend the definition and results of data
minimality in order to capture reactivity, and study monitorability in
this setting.

\subsubsection*{Acknowledgements}
This research has been partially supported
by the United States NSF SaTC Award~1813388,
by the Swedish Research Council ({\it Vetenskapsr\aa det}) under
Grant~2015-04154 ``PolUser'',
by the Madrid Regional Government under Project~S2018/TCS-4339
``BLOQUES-CM'',
by EU H2020 Project~731535 ``Elastest'', and
by Spanish National Project~PGC2018-102210-B-100 ``BOSCO''.

\newpage

\renewcommand{\bibsection}{\section*{References}}

\bibliographystyle{tex-includes/splncsnat}
\begingroup
  \microtypecontext{expansion=sloppy}
  \small %
  \bibliography{paper}
\endgroup

\vfill

\end{document}